\renewcommand{\log}{\lg}
\newcommand{\Oh}[1]
    {\ensuremath{\mathcal{O}\!\left( {#1} \right)}}
\newcommand{\rank}
    {\ensuremath{\mathrm{rank}}}
\newcommand{\select}
    {\ensuremath{\mathrm{select}}}
\newcommand{\occ}
    {\ensuremath{\mathrm{occ}}}
\newcommand{\ignore}[1]{}
\begin{document}

\markboth{D. Belazzougui et al.}{Range Majorities and Minorities in Arrays}

\title{Range Majorities and Minorities in Arrays}
\author{DJAMAL BELAZZOUGUI \affil{CERIST, Algeria}
    TRAVIS GAGIE \affil{University of Helsinki, Finland}
    J.\ IAN MUNRO \affil{University of Waterloo, Canada}
    GONZALO NAVARRO \affil{University of Chile}
    YAKOV NEKRICH \affil{University of Waterloo, Canada}}

\begin{abstract}
Karpinski and Nekrich (2008) introduced the problem of parameterized range
majority, which asks us to preprocess a string of length $n$ such that, given the
endpoints of a range, one can quickly find all the distinct elements whose
relative frequencies in that range are more than a threshold $\tau$.
Subsequent authors have reduced their time and space bounds such that, when
$\tau$ is fixed at preprocessing time, we need either $\Oh{n \log (1 / \tau)}$
space and optimal $\Oh{1 / \tau}$ query time or linear space and $\Oh{(1 /
\tau) \log \log \sigma}$ query time, where $\sigma$ is the alphabet size.  In
this paper we give the first linear-space solution with optimal $\Oh{1 /
\tau}$ query time, even with variable $\tau$ (i.e., specified with the query).
For the case when 
$\sigma$ is polynomial on the computer word size, our space is optimally compressed according to the symbol frequencies in the string. Otherwise, either the compressed space is increased by an arbitrarily small constant factor or the time rises to any function in $(1/\tau)\cdot\omega(1)$. We obtain the same results on the complementary problem of parameterized range minority introduced by Chan et al.\ (2015), who had achieved linear space and $\Oh{1 / \tau}$ query time with variable $\tau$.
\end{abstract}

\category{E.1}{Data structures}{}
\category{E.4}{Coding and information theory}{Data compaction and compression}

\terms{Arrays, Range Queries}

\keywords{Compressed data structures, range majority and minority}


\begin{bottomstuff}
Funded in part by 
Academy of Finland grant 268324 and Millennium Nucleus Information and Coordination in Networks ICM/FIC RC130003 (Chile). An early partial version of this article appeared in {\em Proc. WADS 2013}.

Authors' addresses: D. Belazzougui, Research Center on Technical and Scientific
Information, Algeria; T. Gagie, Department of Computer Science, University of Helsinki; I. Munro and Y. Nekrich, David Cheriton School of Computer Science, University of Waterloo, Canada; G. Navarro, Department of Computer Science, University of Chile.
\end{bottomstuff}

\maketitle

\section{Introduction} \label{sec:intro}

Finding frequent elements in a dataset is a fundamental operation in data mining. The most frequent elements can be difficult to spot when all the elements have nearly equal frequencies.  In some cases, however, we are interested in the most frequent elements only if they really are frequent.  For example, \citet{MG82} showed how, given a string and a threshold \(0 < \tau \leq 1\), by scanning the string twice and using $\Oh{1 / \tau}$ space we can find all the distinct elements whose relative frequencies exceed $\tau$. These elements are called the $\tau$-majorities of the string. If the element universe is $[1..\sigma]$, their algorithm can run in linear time and $\Oh{\sigma}$ space. \citet{DLM02} rediscovered the algorithm and deamortized the cost per element; \citet{KSP03} rediscovered it again, obtaining $\Oh{1/\tau}$ space and linear randomized time.  As \citet{CM03} put it, ``papers on frequent items are a frequent item!''.

\citet{KMS05} introduced the problem of preprocessing the string such that later, given the endpoints of a range, we can quickly return the mode of that range (i.e., the most frequent element).  They gave two solutions, one of which takes $\Oh{n^{2 - 2 \epsilon}}$ space for any fixed positive \(\epsilon \leq 1 / 2\), and answers queries in $\Oh{n^\epsilon \log \log n}$ time; the other takes $\Oh{n^2 \log \log n / \log n}$ space and answers queries in $\Oh{1}$ time. \citet{Pet08} reduced \citeauthor{KMS05}$\!$'s first time bound to $\Oh{n^\epsilon}$ for any fixed non-negative \(\epsilon < 1 / 2\), and \citet{PG09} reduced the second space bound to $\Oh{n^2 \log \log n / \log^2 n}$. \citet{CDLMW14} gave an $\Oh{n}$ space solution that answers queries in $\Oh{\sqrt{n / \log n}}$ time.  They also gave evidence suggesting we cannot easily achieve query time substantially smaller than $\sqrt{n}$ using linear space; however, the best known lower bound 
\cite{GJLT10} says only that we cannot achieve query time \(o ( \log (n) / \log (s w / n) )\) using $s$ words of $w$ bits each.  Because of the difficulty of supporting range mode queries, \citet{BKMT05} and \citet{GJLT10} considered the problem of approximate range mode, for which we are asked to return an element whose frequency is at least a constant fraction of the mode's frequency.

\citet{KN08} took a different direction, analogous to \citeauthor{MG82}' approach, when they introduced the problem of preprocessing the string such that later, given the endpoints of a range, we can quickly return the $\tau$-majorities of that range.  We refer to this problem as parameterized range majority.  Assuming $\tau$ is fixed when we are preprocessing the string, they showed how we can store the string in $\Oh{n (1 / \tau)}$ space and answer queries in $\Oh{(1 / \tau) (\log \log n)^2}$ time.  They also gave bounds for dynamic and higher-dimensional versions.  \citet{DHMNS13} independently posed the same problem and showed how we can store the string in $\Oh{n \log (1 / \tau + 1)}$ space and answer queries in $\Oh{1 / \tau}$ time.  Notice that, because there can be up to \(1 / \tau\) distinct elements to return, this time bound is worst-case optimal. \citet{GHMN11} showed how to store the string in compressed space --- i.e., $\Oh{n (H + 1)}$ bits, where 
$H$ is the entropy of the distribution of elements in the string --- such that we can answer queries in $\Oh{(1 / \tau) \log \log n}$ time. Note that $H \le \lg\sigma$, thus $\Oh{n(H+1)}$ bits is $\Oh{n}$ space.  They also showed how to handle a variable $\tau$ and still achieve optimal query time, at the cost of increasing the space bound by a \((\log n)\)-factor.  That is, they gave a data structure that stores the string in $\Oh{n (H + 1)}$ words such that later, given the endpoints of a range and $\tau$, we can return the $\tau$-majorities of that range in $\Oh{1 / \tau}$ time. \citet{CDSW15} gave another solution for variable $\tau$, which also has $\Oh{1 / \tau}$ query time and uses $\Oh{n \log n}$ space.  As far as we know, these are all the relevant bounds for \citeauthor{KN08}'s original problem, both for fixed and variable $\tau$; they are summarized in Table~\ref{tab:results} together with our new results.  Related work includes dynamic structures \cite{EHMN11}, approximate solutions \cite{LPS08,WY11}, and encodings that do not access the string at query time \cite{NT15}. 

\begin{table}
\tbl{Results for the problem of parameterized range majority on a string of length $n$ over an alphabet $[1..\sigma]$ with $\sigma \le n$ in which the distribution of the elements has entropy $H \le \lg\sigma$. Note that all the spaces given in bits
are in $\Oh{n}$ words.
\label{tab:results}}
{\begin{tabular}{l@{\hspace{2ex}}|@{\hspace{2ex}}c@{\hspace{3ex}}c@{\hspace{3ex}}c}
source & space & time & $\tau$ is\\[.5ex]
\hline\\[-2ex]
\cite{KN08} & $\Oh{n (1 / \tau)}$ words & $\Oh{(1/\tau)(\log \log n)^2}$ & fixed\\[1ex]
\cite{DHMNS13} & $\Oh{{n \log (1 / \tau)}}$ words & $\Oh{1 / \tau}$ & fixed\\[1ex]
\cite{GHMN11} & $\Oh{n (H + 1)}$ bits & $\Oh{(1 / \tau) \log \log \sigma}$ & fixed\\[1ex]
\cite{GHMN11} & $\Oh{n (H + 1)}$ words & $\Oh{1 / \tau}$ & variable\\[1ex]
\cite{CDSW15} & $\Oh{n \log n}$ words & $\Oh{1 / \tau}$ & variable\\[1ex]
\hline\\[-1.5ex]
Theorem~\ref{thm:compressed-smallsigma} ($\lg\sigma = \Oh{\lg w}$) & $nH+o(n)$ bits & $\Oh{1 / \tau}$ & variable\\[1ex]
Theorem~\ref{thm:compressed-largesigma} & $nH +o(n)(H+1)$ bits & Any $(1 / \tau)\cdot \omega(1)$ & variable\\[1ex]
Theorem~\ref{thm:epsilon-largesigma} & $(1+\epsilon)nH+o(n)$ bits & $\Oh{1 / \tau}$ & variable\\[1ex]
\end{tabular}}
\end{table}

In this paper we first consider the complementary problem of parameterized range minority, which was introduced by \citet{CDSW15} (and then generalized to trees by \citet{DSST16}).  For this problem we are asked to preprocess the string such that later, given the endpoints of a range, we can return (if one exists) a distinct element that occurs in that range but is not one of its $\tau$-majorities.  Such an element is called a $\tau$-minority for the range.  At first, finding a $\tau$-minority might seem harder than finding a $\tau$-majority because, for example, we are less likely to find a $\tau$-minority by sampling.  Nevertheless, \citeauthor{CDSW15} gave an $\Oh{n}$ space solution with $\Oh{1 / \tau}$ query time even for variable $\tau$.  In Section~\ref{sec:minority} we give two warm-up results, also for variable $\tau$:

\begin{enumerate}[(a)]
\item $\Oh{1 / \tau}$ query time using \((1+\epsilon)n H+\Oh{n}\) bits of space for any constant $\epsilon>0$; and
\item any query time of the form $(1 / \tau)\cdot\omega(1)$ using \(n H+o(n)(H+1)\) bits of space.
\end{enumerate}

That is, our spaces are not only linear like \citeauthor{CDSW15}$\!$'s, but also
compressed: we use either nearly optimally compressed space with no slowdown,
or optimally compressed space with nearly no slowdown. Within this space, we
can access $S$ in constant time. We reuse ideas from this section in our solutions for parameterized range majority.

In Section~\ref{sec:smallsigma} we return to \citeauthor{KN08}'s original problem of parameterized range majority, but with variable $\tau$, and consider the case where the alphabet of the string is polynomial on the computer word size, $\lg\sigma = \Oh{\lg w}$. We start with a simple linear-space structure (i.e., $\Oh{n\lg\sigma}$ bits, which is stricter than other linear-space solutions using $\Oh{n\lg n}$ bits) with worst-case optimal $\Oh{1 / \tau}$ query time, and then refine it to obtain optimally compressed space. In Section~\ref{sec:largesigma} we extend this solution to the more challenging case of larger alphabets, $\lg\sigma=\omega(\lg w)$. We also start with a linear-space structure and then work towards compressing it. While we obtain nearly optimally compressed space and worst-case optimal query time, reaching optimally compressed space imposes a slight slowdown in the query time. Summarizing, we obtain the following tradeoffs:
\begin{enumerate}
\item $\Oh{1 / \tau}$ query time using $nH+o(n)$ bits of space, where $\lg\sigma=\Oh{\lg w}$;
\item $\Oh{1 / \tau}$ query time using \((1+\epsilon)n H+o(n)\) bits of space for any constant $\epsilon>0$; and
\item any query time of the form $(1 / \tau)\cdot\omega(1)$ using \(n H+o(n)(H+1)\) bits of space.
\end{enumerate}
Overall, we obtain for the first time $\Oh{n}$-word (i.e., linear space) and 
$\Oh{1/\tau}$ (i.e., worst-case optimal) query time, even for variable $\tau$.
In all cases, we preserve constant-time access to $S$ within the compressed
space we use, and our queries require $\Oh{1/\tau}$ extra working space.
As a byproduct, we also show how to
find the range mode in a time that depends on how frequent it actually is.

Finally, in Section~\ref{sec:minagain} we return to $\tau$-minorities. By
exploiting the duality with $\tau$-majorities, we reuse the results obtained
for the latter, so that the tradeoffs (1)--(3) are also obtained for 
$\tau$-minorities. Those results supersede the original tradeoff (a) obtained
in Section~\ref{sec:minority}. Actually, a single data structure with the
spaces given in points (1)--(3) solves at the same time $\tau$-minority and
$\tau$-majority queries.

\section{Preliminaries} \label{sec:preliminaries}

We use the RAM model of computation with word size in bits $w=\Omega(\lg n)$,
allowing multiplications. The input is an array $S[1..n]$ of
symbols (or ``elements'') from $[1..\sigma]$, where for simplicity we assume
$\sigma \le n$ (otherwise we could remap the alphabet so that every symbol
actually appears in $S$, without changing the output of any $\tau$-majority or
$\tau$-minority query).

\subsection{Access, select, rank, and partial rank} \label{subsec:queries}

Let \(S [1..n]\) be a string over alphabet $[1..\sigma]$, for $\sigma \le n$, and let $H \le \lg\sigma$ be the entropy of the distribution of elements in $S$, that is, $H=\sum_{a\in[1..\sigma]} \frac{n_a}{n}\lg\frac{n}{n_a}$, where each element $a$ appears $n_a$ times in $S$.  An access query on $S$ takes a position $k$ and returns \(S [k]\); a rank query takes an alphabet element $a$ and a position $k$ and returns $\rank_a(S,k)$, the number of occurrences of $a$ in \(S [1..k]\); a select query takes an element $a$ and a rank $r$ and returns $\select_a(S,r)$, the position of the $r$th occurrence of $a$ in $S$.  A partial rank query, $\rank_{S[k]}(S,k)$, is a rank query with the restriction that the element $a$ must occur in the position $k$.  These are among the most well-studied operations on strings, so we state here only the results most relevant to this paper.

For \(\sigma = 2\) and any constant $c$, \citet{Pat08} showed how we can store $S$ in \(n H + \Oh{n / \log^c n}\) bits, supporting all the queries in time $\Oh{c}$. If $S$ has $m$ 1s, this space is $\Oh{m\lg\frac{n}{m}+n / \log^c n}$ bits.
For \(\lg \sigma = \Oh{\log\log n}\), \citet{FMMN07} showed how we can store $S$ in \(n H + o (n)\) bits and support all the queries in $\Oh{1}$ time. This result was later extended to the case $\lg\sigma = \Oh{\lg w}$ \cite[Thm.\ 7]{BN15}. \citet{BCGNN13} showed how, for any positive constant $\epsilon$, we can store $S$ in \((1 + \epsilon) n H + o (n)\) bits and support access and select in $\Oh{1}$ time and rank in $\Oh{\log \log \sigma}$ time.  Alternatively, they can store $S$ in $nH+o(n)(H+1)$ bits and support either access or select in time $\Oh{1}$, and the other operation, as well as rank, in time $\Oh{\log\log\sigma}$. \citet[Thm.\ 8]{BN15} improved the time of rank to $\Oh{\lg\lg_w \sigma}$, which they proved optimal, and the time of the non-constant operation to any desired function in $\omega(1)$. \citet[Sec.\ 3]{BN14} showed how to support $\Oh{1}$-time partial rank using \(o (n) (H + 1)\) further bits. 
Throughout this article we also use some simpler variants of these results.

\subsection{Colored range listing} \label{subsec:listing}

\citet{Mut02} showed how we can store \(S [1..n]\) such that, given the endpoints of a range, we can quickly list the distinct elements in that range and the positions of their leftmost occurrences therein.  Let \(C [1..n]\) be the array in which \(C [k]\) is the position of the rightmost occurrence of the element \(S [k]\) in \(S [1..k - 1]\) --- i.e., the last occurrence before \(S [k]\) itself --- or 0 if there is no such occurrence.  Notice \(S [k]\) is the first occurrence of that distinct element in a range \(S [i..j]\) if and only if \(i \leq k \leq j\) and \(C [k] < i\).  We store $C$, implicitly or explicitly, and a data structure supporting $\Oh{1}$-time range-minimum queries on $C$ that returns the position of the leftmost occurrence of the minimum in the range.

To list the distinct elements in a range \(S [i..j]\), we find the position $m$ of the leftmost occurrence of the minimum in the range \(C [i..j]\); check whether \(C [m] < i\); and, if so, output \(S [m]\) and $m$ and recurse on \(C [i..m - 1]\) and \(C [m + 1..j]\).  This procedure is online --- i.e., we can stop it early if we want only a certain number of distinct elements --- and the time it takes per distinct element is $\Oh{1}$ plus the time to access $C$.

Suppose we already have data structures supporting access, select and partial rank queries on $S$, all in $\Oh{t}$ time.  Notice \(C [k] = \select_{S [k]} (S, \rank_{S [k]} (S,k) - 1)\), so we can also support access to $C$ in $\Oh{t}$ time.  \citet{Sad07} and \citet{Fis10} gave $\Oh{n}$-bit data structures supporting $\Oh{1}$-time range-minimum queries.  Therefore, we can implement \citeauthor{Mut02}'s solution using $\Oh{n}$ extra bits such that it takes $\Oh{t}$ time per distinct element listed.

\section{Parameterized Range Minority} \label{sec:minority}

\citet{CDSW15} gave a linear-space solution with $\Oh{1 / \tau}$ query time for parameterized range minority, even for the case of variable $\tau$ (i.e., chosen at query time).  They first build a list of \(\lceil 1 / \tau \rceil\) distinct elements that occur in the given range (or as many as there are, if fewer) and then check those elements' frequencies to see which are $\tau$-minorities.  There cannot be as many as \(\lceil 1 / \tau \rceil\) $\tau$-majorities so, if there exists a $\tau$-minority for that range, then at least one must be in the list.  In this section we use a simple approach to implement this idea using compressed space; then we obtain more refined results in Section~\ref{sec:minagain}.

\subsection{Compressed space}

To support parameterized range minority on \(S [1..n]\) in $\Oh{1 / \tau}$ time, we store a data structure supporting $\Oh{1}$-time access, select and partial rank queries on $S$ \cite{BCGNN13} and a data structure supporting $\Oh{1}$-time range-minimum queries on the $C$ array corresponding to $S$ \cite{Mut02}. As seen in Section~\ref{sec:preliminaries}, for any positive constant $\epsilon$, we can store these data structures in a total of \((1 + \epsilon) n H + \Oh{n}\) bits.  Given $\tau$ and endpoints $i$ and $j$, in $\Oh{1 / \tau}$ time we use \citeauthor{Mut02}'s algorithm to build a list of \(\lceil 1 / \tau \rceil\) distinct elements that occur in \(S [i..j]\) (or as many as there are, if fewer) and the positions of their leftmost occurrences therein.  We check whether these distinct elements are $\tau$-minorities using the following lemma:

\begin{lemma} \label{lem:check}
Suppose we know the position of the leftmost occurrence of an element in a range. Then we can check whether that element is a $\tau$-minority or a $\tau$-majority using a partial rank query and a select query on $S$.
\end{lemma}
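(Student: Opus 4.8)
The plan is to reduce the majority/minority test to a single comparison of a position against the right endpoint $j$. Write $a = S[k]$ for the element in question, where $k$ is the known position of its leftmost occurrence in $S[i..j]$, so $i \le k \le j$ and $C[k] < i$. Because $k$ is the \emph{leftmost} occurrence of $a$ in the range, the occurrences of $a$ inside $S[i..j]$ are precisely the consecutive occurrences of $a$ in $S$ starting at position $k$ and continuing up to the last one that is at most $j$. Consequently, if $r$ denotes the overall rank of the occurrence at $k$ --- that is, position $k$ holds the $r$th occurrence of $a$ in all of $S$ --- then the frequency $f$ of $a$ in $S[i..j]$ equals the number of occurrences of $a$ numbered $r, r+1, \ldots$ whose positions do not exceed $j$.

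First I would obtain $r$ with a single partial rank query, $r = \rank_{S[k]}(S,k)$; this is legitimate precisely because $S[k] = a$. Next I would translate the majority condition into a threshold on $f$. By definition $a$ is a $\tau$-majority of $S[i..j]$ iff $f > \tau(j-i+1)$, and since $f$ is an integer this is equivalent to $f \ge c$ with $c = \lfloor \tau(j-i+1) \rfloor + 1$. Now the condition $f \ge c$ says exactly that the range contains at least $c$ occurrences of $a$, i.e.\ that the $c$th occurrence counting from position $k$ --- which is the $(r+c-1)$th occurrence of $a$ in all of $S$ --- still lies at or before $j$. Hence a single select query finishes the test: I would compute $p = \select_a(S, r+c-1)$ and report that $a$ is a $\tau$-majority iff $p \le j$, and a $\tau$-minority otherwise (we already know $a$ occurs in the range, so these are the only two cases). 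In total this is one partial rank query and one select query, as claimed.

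The routine but delicate points, which I expect to be the only real obstacle, are the boundary conditions of the arithmetic. One must verify that the floor-plus-one choice of $c$ reproduces the strict inequality ``more than $\tau$'' for every real value of $\tau(j-i+1)$, including when it happens to be an integer. One must also handle the case $r+c-1 > n_a$, where $\select_a(S, r+c-1)$ is undefined: there are then fewer than $c$ occurrences of $a$ in all of $S$, so certainly $f < c$ and $a$ is not a majority; it is convenient to define the select to return a value larger than $j$ (say $+\infty$) in this case, so that the single comparison $p \le j$ remains correct. Finally, note that we never need a rank query at $i-1$: the leftmost-occurrence hypothesis gives $\rank_a(S, i-1) = r-1$ for free, which is exactly what lets a lone select suffice.
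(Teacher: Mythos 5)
Your proposal is correct and matches the paper's proof essentially verbatim: both compute $r=\rank_{S[k]}(S,k)$ by a partial rank query and test whether $\select_a(S,\,r+\lfloor\tau(j-i+1)\rfloor)$ lies beyond $j$. Your added remarks on the floor arithmetic and the out-of-range select are sound but routine.
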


\begin{proof}
Let $k$ be the position of the first occurrence of $a$ in \(S [i..j]\).  If \(S [k]\) is the $r$th occurrence of $a$ in $S$, then $a$ is a $\tau$-minority for \(S [i..j]\) if and only if the \((r + \lfloor \tau (j - i + 1) \rfloor)\)th occurrence of $a$ in $S$ is strictly after \(S [j]\); otherwise $a$ is a $\tau$-majority.  That is, we can check whether $a$ is a $\tau$-minority for \(S [i..j]\) by checking whether
\[\select_a (S, \rank_a (S,k) + \lfloor \tau (j - i + 1) \rfloor ) > j\,;\]
since \(S [k] = a\), computing \(\rank_a (S,k)\) is only a partial rank query.
\qed
\end{proof}

This gives us the following theorem, which improves \citeauthor{CDSW15}$\!$'s solution to use nearly optimally compressed space with no slowdown.

\begin{theorem} \label{thm-min:epsilon}
Let $S[1..n]$ be a string whose distribution of symbols has entropy $H$.
For any constant $\epsilon>0$, we can store $S$ in \((1 + \epsilon) n H + \Oh{n}\) bits such that later, given the endpoints of a range and $\tau$, we can return a $\tau$-minority for that range (if one exists) in time $\Oh{1 / \tau}$.
\end{theorem}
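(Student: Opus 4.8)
The plan is to assemble the components sketched above and then verify the bounds. First I would store $S$ using the structure of \citet{BCGNN13} that supports access and select in $\Oh{1}$ time within $(1+\epsilon)nH+o(n)$ bits, augmented by the $o(n)(H+1)$-bit structure of \citet{BN14} for $\Oh{1}$-time partial rank. Over the (implicitly represented) array $C$ I would add an $\Oh{n}$-bit range-minimum structure \cite{Sad07,Fis10}. As observed in Section~\ref{subsec:listing}, access to $C$ then costs $\Oh{1}$, so \citeauthor{Mut02}'s colored-listing procedure runs in $\Oh{1}$ time per element reported.

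To answer a query on $\tau$ and a range $[i..j]$, I would run the listing procedure to enumerate at most $\lceil 1/\tau\rceil$ distinct elements of $S[i..j]$, each together with the position of its leftmost occurrence in the range; this takes $\Oh{1/\tau}$ time. I would then apply Lemma~\ref{lem:check} to each listed element in $\Oh{1}$ time apiece, returning any one found to be a $\tau$-minority, for a total of $\Oh{1/\tau}$ time.

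Correctness rests on a counting bound: the occurrences of distinct $\tau$-majorities are disjoint and each exceeds a $\tau$-fraction of the $j-i+1$ positions, so there are strictly fewer than $1/\tau$ of them, hence at most $\lceil 1/\tau\rceil-1$. Thus if the range contains at least $\lceil 1/\tau\rceil$ distinct elements, at least one listed element must be a $\tau$-minority; and if it contains fewer, the listing enumerates \emph{all} distinct elements, so any existing $\tau$-minority is inspected. Either way we report a $\tau$-minority exactly when one exists.

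I expect the only delicate point to be the space accounting rather than any algorithmic idea. Writing $o(n)(H+1)=o(n)H+o(n)$, I would absorb $o(n)H=o(nH)$ into the slack of the $(1+\epsilon)nH$ term---enlarging the constant $\epsilon$ infinitesimally if necessary---and fold the leftover $o(n)$ together with the range-minimum structure's $\Oh{n}$ bits into the claimed $\Oh{n}$ term, giving the stated $(1+\epsilon)nH+\Oh{n}$ bits.
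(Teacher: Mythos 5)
Your proposal is correct and follows essentially the same route as the paper: the same representation of $S$ (access/select via \citet{BCGNN13}, constant-time partial rank, an $\Oh{n}$-bit range-minimum structure on the implicit $C$), \citeauthor{Mut02}'s listing of $\lceil 1/\tau\rceil$ distinct elements with their leftmost occurrences, and verification via Lemma~\ref{lem:check}. Your counting argument and space accounting (absorbing $o(n)H$ into the $(1+\epsilon)nH$ term) match what the paper states more tersely.
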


\subsection{Optimally compressed space} \label{sec-min:compressed}

By changing our string representation to that of \citet[Thm.\ 8]{BN15},
we can store our data structures for access, select and partial rank on $S$ and range-minimum queries on $C$ in a total of \(n H + o(nH) + \Oh{n}\) bits at the cost of the select queries taking $\Oh{g (n)}$ time, for any desired \(g (n) = \omega (1)\); see again Section~\ref{sec:preliminaries}. 
Therefore the range minority is found in time $\Oh{(1/\tau) g(n)}$.


To reduce the space bound to \(n H + o (n) (H + 1)\) bits, we must reduce the space of the range-minimum data structure to $o(n)$. Such a result was sketched by \citet{HSV09}, but it lacks sufficient detail to ensure correctness. We give these details next.

The technique is based on sparsification. We cut the sequence into blocks of length $g(n)$, choose the $n/g(n)$ minimum values of each block, and build the range-minimum data structure on the new array $C'[1..n/g(n)]$ (i.e., $C'[i]$ stores the minimum 
of $C[(i-1)\cdot g(n)+1..i\cdot g(n)]$). This requires $\Oh{n/g(n)} = o(n)$ bits.
\citeauthor{Mut02}'s algorithm is then run over $C'$ as follows. We find the minimum
position in $C'$, then recursively process its left interval, then process the
minimum of $C'$ by considering the $g(n)$ corresponding cells in $C$, and 
finally process the right part of the interval. The recursion stops when the
interval becomes empty or when all the $g(n)$ elements in the block of $C$
are already reported. 

\begin{lemma} \label{lem:sparsemuthu}
The procedure described identifies the leftmost positions of all the
distinct elements in a block-aligned interval $S[i..j]$, working over at most 
$g(n)$ cells per new element discovered.
\end{lemma}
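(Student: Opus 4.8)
The plan is to prove two things separately: that the procedure reports exactly the leftmost occurrences of the distinct elements of $S[i..j]$ (correctness), and that the number of cells of $C$ it inspects is at most $g(n)$ times the number of elements it discovers (efficiency). The starting observation is that, restricted to $C'$, the procedure is precisely \citeauthor{Mut02}'s algorithm run on the block range $C'[b_i..b_j]$ spanned by $S[i..j]$, with threshold $i$, where ``reporting'' a block $b$ means scanning its $g(n)$ cells of $C$ and outputting every position $k$ with $C[k]<i$. I would therefore first recall why \citeauthor{Mut02}'s algorithm, applied to an array with a threshold $t$, reports precisely the positions whose value is below $t$: at each call the minimum $m$ of the current subinterval either satisfies $A[m]\ge t$, in which case no entry of the subinterval is below $t$ and we may stop, or $A[m]<t$, in which case we report $m$ and recurse on both sides; a straightforward induction on the interval length finishes the argument.

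Applying this to $C'$ with threshold $i$, a block $b$ is scanned if and only if $C'[b]<i$, and since $C'[b]$ is the minimum of block $b$, this happens if and only if block $b$ contains at least one cell $k$ with $C[k]<i$. Here the block-alignment hypothesis does the real work: because $S[i..j]$ is block-aligned, every block of the range $[b_i..b_j]$ lies entirely inside $[i..j]$, so each such cell $k$ is a genuine position of $S[i..j]$ and, by the characterization of Section~\ref{subsec:listing}, is the leftmost occurrence of the distinct element $S[k]$ in $S[i..j]$. Conversely, every leftmost occurrence $k$ has $C[k]<i$, lies in some block $b$ with $C'[b]\le C[k]<i$, and is thus output when that block is scanned. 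Since the blocks are disjoint and each is scanned at most once, no position is emitted twice, and distinct leftmost occurrences correspond to distinct elements; this establishes correctness.

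For efficiency I would use a charging argument on the recursion tree over $C'$. The nodes at which we scan a block are exactly those with $C'[b_m]<i$; each such node has two recursive children and costs $g(n)$ cell inspections. By block-alignment again, the minimizing cell of such a block is a leftmost occurrence, so every scanned block yields at least one previously undiscovered distinct element. Writing $d$ for the number of distinct elements discovered, this bounds the number of scanning nodes by $d$, so the total cost is at most $g(n)\cdot d$ cell inspections, i.e.\ at most $g(n)$ cells of $C$ per element, as claimed. The $\Oh{1}$-time range-minimum queries are absorbed into the same bound, since the scanning nodes form a binary tree with at most one internal node per discovered element, whence there are $\Oh{d}$ nodes overall.

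The step I expect to be delicate --- and presumably the detail missing from \citeauthor{HSV09}'s sketch --- is precisely the guarantee that \emph{every} scanned block produces a new element. This is what keeps the per-element cost at $g(n)$ rather than degrading to $g(n)$ per \emph{visited} block, and it hinges entirely on the interval being block-aligned: a boundary block only partially contained in $S[i..j]$ could have $C'[b]<i$ because of an out-of-range cell while contributing no leftmost occurrence, which would wreck the charging. I would therefore state the lemma, as here, only for block-aligned intervals, and in the surrounding construction handle the at most two boundary blocks of a general query directly and separately.
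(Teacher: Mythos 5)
Your proof follows essentially the same route as the paper's: an induction over the sparsified recursion on $C'$ for correctness, together with a charging argument that bounds the number of inspected cells of $C$ by the number of distinct elements discovered. The correctness half is fine, and you correctly isolate the role of block-alignment (every cell of a scanned block is a genuine position of $S[i..j]$, so a block minimum below $i$ guarantees a fresh leftmost occurrence); this is exactly the point the paper's induction makes, and your closing remark about handling the two partially overlapped boundary blocks separately matches what the paper does right after the lemma.

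The one flaw is in the efficiency half, in the claim that ``a block $b$ is scanned if and only if $C'[b]<i$'' and the resulting bound of exactly $g(n)\cdot d$ cells. This presumes that the test $C'[b_m]<i$ is free, i.e., that the value $C'[b_m]$ is available in $\Oh{1}$ time. It is not: the structure keeps only an $\Oh{n/g(n)}$-bit range-minimum index, which returns the \emph{position} of the minimum without accessing the array; $C'$ itself cannot be stored explicitly, since $(n/g(n))\lg n$ bits is not $o(n)$ when $g(n)$ is small (and recomputing $C'[b_m]$ already means scanning the block). Consequently the procedure, as described, must scan the $g(n)$ cells of the minimum block \emph{before} it can decide whether to stop, so unproductive scans do occur at the leaves of the recursion and your accounting misses them. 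The paper's fix is to charge each such wasted scan to the parent call, which did discover at least one new element; since a productive call spawns at most two children, this costs at most $3g(n)$ cells per discovered element, which still gives the $\Oh{g(n)}$-per-element bound the rest of the analysis uses. With that one extra charging step your argument goes through. A secondary remark: the paper deliberately phrases the stopping test as ``all cells of the block already reported'' rather than as a value comparison against $i$, because it reuses this variant later (Section~\ref{sec:min-smallsigma}) with a different notion of ``already reported''; your reformulation via the threshold on $C'$ would not cover that use, though it is outside the scope of the lemma as stated.
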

\begin{proof}
We proceed by induction on the size of the current subinterval $[\ell..r]$,
which is always block-aligned.
Let $k'$ be the position of the minimum in $C'[(\ell-1)g(n)+1..r/g(n)]$
and let $k$ be the position of the minimum in $C[(k'-1)\cdot g(n)+1,k'\cdot g(n)]$. 
Then $C[k]$ is clearly the minimum in $C[\ell..r]$ and $S[k]$ is the leftmost occurrence in $S[\ell..r]$ of 
the element $a=S[k]$. If $C[k] \ge i$, then 
$a$ already occurs in $S[i..\ell-1]$ and we have already reported it. Since
the minimum of $C[\ell..r]$ is within the block $k'$ of $C$, it is sufficient
that $C[k] \ge i$ for all the positions $k$ in that block to ensure that all
the values in $S[\ell..r]$ have already been reported, in which case we can 
stop the procedure. The $g(n)$ scanned cells can be charged to the function that
recursively invoked the interval $[\ell..r]$.

Otherwise, we recursively process the interval to the left of
block $k'$, which by inductive hypothesis reports the unique elements in that
interval. Then we process the current block of size $g(n)$, finding at least
the new occurrence of element $S[k]$ (which cannot appear to the left of $k'$).
Finally, we process the interval
to the right of $k'$, where the inductive hypothesis again holds. 

Note that the method is also correct if, instead of checking whether all the
elements in the block of $C'[k']$ are $\ge i$, we somehow check that all of 
them have already been reported. We will use this variant later in the paper.
\qed
\end{proof}

For general ranges $S[i..j]$, we must include in the range of $C'$ the two 
partially overlapped blocks on the extremes of the range. When it comes to 
process one of those blocks, we only consider the cells that are inside 
$[i..j]$; the condition to report an element is still that $C[k] < i$. 

We use this procedure to obtain any $\lceil 1/\tau\rceil$ distinct elements.
We perform $\Oh{(1/\tau)\,g(n)}$ accesses to $C$, each of which costs time
$\Oh{g(n)}$ because it involves a select query on $S$. Therefore the total time
is $\Oh{(1/\tau)\,g(n)^2}$. Testing each of the candidates with 
Lemma~\ref{lem:check} takes time $\Oh{(1/\tau)\,g(n)}$, because we also use 
select queries on $S$. Therefore, for
any desired time of the form $\Oh{(1/\tau)\,f(n)}$, we use $g(n)=\sqrt{f(n)}$.

\begin{theorem} \label{thm-min:compressed}
Let $S[1..n]$ be a string whose distribution of symbols has entropy $H$.
For any function \(f (n) = \omega (1)\), we can store $S$ in \(n H + o(n)(H+1)\) bits such that later, given the endpoints of a range and $\tau$, we can return a $\tau$-minority for that range (if one exists) in time $\Oh{(1/\tau)\,f (n)}$.
\end{theorem}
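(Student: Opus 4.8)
The plan is to assemble the two compressed components developed above. First I would store $S$ using the representation of \citet[Thm.\ 8]{BN15}, augmented with the partial-rank structure of \citet[Sec.\ 3]{BN14}; as recalled in Section~\ref{sec:preliminaries}, this occupies $nH + o(n)(H+1)$ bits and supports access and partial rank in $\Oh{1}$ time while letting select cost $\Oh{g(n)}$ for any chosen $g(n) = \omega(1)$. On top of this I would build the sparsified range-minimum structure of Lemma~\ref{lem:sparsemuthu} on the array $C$, which occupies only $\Oh{n/g(n)} = o(n)$ bits. Since $C$ is never stored explicitly, each access to a cell of $C$ is realized through the identity $C[k] = \select_{S[k]}(S, \rank_{S[k]}(S,k)-1)$ and therefore costs $\Oh{g(n)}$. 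The total space is thus $nH + o(n)(H+1)$ bits, as claimed.

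To answer a query on $S[i..j]$ with threshold $\tau$, I would first run the sparsified listing of Lemma~\ref{lem:sparsemuthu} to extract up to $\lceil 1/\tau\rceil$ distinct elements together with the positions of their leftmost occurrences in the range, handling the two partially overlapped extreme blocks exactly as described after that lemma. Then, for each candidate, I would apply Lemma~\ref{lem:check} to decide whether it is a $\tau$-minority, returning the first one that is (or reporting that none exists). Correctness follows because fewer than $\lceil 1/\tau\rceil$ elements can be $\tau$-majorities, so any range admitting a $\tau$-minority must have one among the listed candidates.

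For the time analysis, the listing discovers each new element after scanning $\Oh{g(n)}$ cells of $C$ (Lemma~\ref{lem:sparsemuthu}), so it touches $\Oh{(1/\tau)\,g(n)}$ cells overall; since each cell access triggers a select query costing $\Oh{g(n)}$, the listing runs in $\Oh{(1/\tau)\,g(n)^2}$ time. The verification step performs $\Oh{1/\tau}$ select queries via Lemma~\ref{lem:check}, adding only $\Oh{(1/\tau)\,g(n)}$. Hence the query time is $\Oh{(1/\tau)\,g(n)^2}$, and setting $g(n) = \sqrt{f(n)}$ yields $\Oh{(1/\tau)\,f(n)}$; because $f(n) = \omega(1)$ we have $g(n) = \omega(1)$, so both the \citet{BN15} representation and the sparsification remain applicable.

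I expect the main obstacle to be bookkeeping the two independent sources of slowdown, which compound multiplicatively: the sparsification inflates the number of $C$-accesses by a factor $g(n)$, while the compressed string representation inflates the cost of each such access by another factor $g(n)$. Balancing these is precisely what forces the choice $g(n) = \sqrt{f(n)}$, and one must confirm that this still leaves $g(n) = \omega(1)$ so the building blocks apply. The genuinely delicate correctness points of the sparsified listing --- processing non-block-aligned ranges and the charging argument bounding the cells scanned per reported element --- are already isolated in Lemma~\ref{lem:sparsemuthu} and the remark following it, so they can be invoked directly rather than reproved here.
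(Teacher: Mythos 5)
Your proposal is correct and follows essentially the same route as the paper: represent $S$ with \citet[Thm.\ 8]{BN15} plus constant-time partial rank, sparsify \citeauthor{Mut02}'s range-minimum structure into blocks of length $g(n)$ so it fits in $o(n)$ bits, list $\lceil 1/\tau\rceil$ candidates via Lemma~\ref{lem:sparsemuthu}, verify with Lemma~\ref{lem:check}, and balance the two multiplicative $g(n)$ slowdowns by choosing $g(n)=\sqrt{f(n)}$. No substantive differences from the paper's argument.
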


Note that this representation retains constant-time access to $S$.

\section{Parameterized Range Majority on Small Alphabets} 
\label{sec:smallsigma}

In this section we consider the case $\lg\sigma = \Oh{\lg w}$, where rank 
queries on $S$ can be supported in constant time. Our strategy is to find a
set of $\Oh{1/\tau}$ candidates that contain all the possible $\tau$-majorities
and then check them one by 
one, counting their occurrences in $S[i..j]$ via rank queries on $S$. The 
time will be worst-case optimal, $\Oh{1/\tau}$. We first obtain 
$\Oh{n\lg\sigma}$ bits of space and then work towards compressing it.

First, note that if $\tau < 1/\sigma$, we can simply assume that all the 
$\sigma$ symbols are candidates for majority, and check them one by
one; therefore we care only about how to find $\Oh{1/\tau}$ candidates in the 
case $\tau \ge 1/\sigma$. 

\subsection{Structure} \label{sec:smallsigma-struc}

We store an instance of the structure of \citet[Thm.\ 5]{BN15} supporting access, rank, and select on $S$ in $\Oh{1}$ time, using $n\lg\sigma+o(n)$ bits. For
every \(0 \leq t \leq \lceil \log \sigma \rceil\) and \(t \leq b \leq \lfloor \log n \rfloor\), we divide $S$ into blocks of length $2^{b - 1}$ and store a binary string \(G_b^t [1..n]\) in which \(G_b^t [k] = 1\) if, (1) the element \(S [k]\) occurs at least $2^{b - t}$ times in \(S [k - 2^{b + 1}..k + 2^{b + 1}]\), and (2) \(S [k]\) is the leftmost or rightmost occurrence of that element in its block. 

At query time we will use \(t = \lceil \log (1 / \tau) \rceil\) and \(b = \lfloor \log (j - i + 1) \rfloor\). The following lemma shows that it is sufficient to
consider the candidates $S[k]$ for $i \le k \le j$ where $G_b^t[k]=1$.

\begin{lemma}
For every $\tau$-majority $a$ of $S[i..j]$ there exists some $k \in [i..j]$
such that $S[k]=a$ and $G_b^t[k]=1$.
\end{lemma}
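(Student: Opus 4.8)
The plan is to fix a $\tau$-majority $a$ of $S[i..j]$, write $L = j - i + 1$, and translate the query parameters into clean inequalities. Since $b = \lfloor \log L \rfloor$ we have $2^b \le L < 2^{b+1}$, and since $t = \lceil \log(1/\tau) \rceil$ we have $\tau \ge 2^{-t}$. Because $a$ is a $\tau$-majority it occurs more than $\tau L$ times in $S[i..j]$, and combining the bounds gives a count exceeding $\tau L \ge 2^{-t}\cdot 2^b = 2^{b-t}$; thus $a$ occurs at least $2^{b-t}$ times inside $[i..j]$. I work in the regime $t \le b$, in which $G_b^t$ is defined and used (the complementary regime, where $1/\tau$ exceeds the range length and every element present is trivially a $\tau$-majority, is disposed of separately, as is the case $\tau < 1/\sigma$).

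The first key observation is that condition~(1) in the definition of $G_b^t$ is automatic for every occurrence of $a$ lying in the range. For any $k \in [i..j]$ the window $S[k - 2^{b+1}..k + 2^{b+1}]$ contains all of $S[i..j]$: from $k \in [i..j]$ and $j - i = L - 1 < 2^{b+1}$ one gets $k - 2^{b+1} \le j - 2^{b+1} < i$ and $k + 2^{b+1} \ge i + 2^{b+1} > j$. Hence the window inherits the $\ge 2^{b-t}$ occurrences of $a$ that already lie in $[i..j]$, so any $k \in [i..j]$ with $S[k] = a$ satisfies condition~(1).

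It remains to exhibit one occurrence of $a$ in $[i..j]$ satisfying condition~(2), i.e.\ one that is the leftmost or rightmost occurrence of $a$ within its block of length $2^{b-1}$. Here I use that the range is strictly longer than a block, $L \ge 2^b > 2^{b-1}$, so no single block can contain $[i..j]$. Pick any block $B$ holding an occurrence $p$ of $a$ with $p \in [i..j]$, and let $l_B \le p \le r_B$ be the leftmost and rightmost occurrences of $a$ in $B$ over all of $S$. If $l_B \notin [i..j]$ then $l_B < i$ (as $l_B \le p \le j$), and if $r_B \notin [i..j]$ then $r_B > j$ (as $r_B \ge p \ge i$); were both true, the contiguous block $B$ would contain $l_B < i \le j < r_B$ and hence all of $[i..j]$, contradicting $L > 2^{b-1} = |B|$. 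Therefore $l_B \in [i..j]$ or $r_B \in [i..j]$; taking $k$ to be whichever endpoint lies in the range yields $S[k] = a$, $k \in [i..j]$, condition~(2) by construction, and condition~(1) by the previous paragraph, so $G_b^t[k] = 1$.

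The main obstacle is not any single calculation but the combinatorial bookkeeping that makes the two block parameters line up at the single value $b = \lfloor \log L \rfloor$: the window radius $2^{b+1}$ must be large enough to swallow the entire range (granting condition~(1) for free at every in-range occurrence of $a$), while simultaneously the block length $2^{b-1}$ must be small enough that the range cannot be buried inside one block (forcing condition~(2) at some in-range occurrence). The argument hinges on both facts holding at once, together with the count threshold $2^{b-t}$ matching the majority requirement through $\tau \ge 2^{-t}$.
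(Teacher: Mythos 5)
Your proof is correct and follows essentially the same route as the paper's: condition~(1) holds for every in-range occurrence of $a$ because the window of radius $2^{b+1}$ around any $k\in[i..j]$ swallows the whole range, which already holds more than $\tau(j-i+1)\ge 2^{b-t}$ occurrences; and condition~(2) holds for some in-range occurrence because the range, being longer than a block of length $2^{b-1}$, cannot avoid both the leftmost and rightmost occurrences of $a$ in a block it meets. Your version is just a more explicit write-up (proving the block step by contradiction rather than via the block-endpoint observation), and you correctly restrict to the regime $b\ge t$ as the paper implicitly does.
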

\begin{proof}
Since \(S [i..j]\) cannot be completely contained in a block of length $2^{b - 1}$, if \(S [i..j]\) overlaps a block then it includes one of that block's endpoints.  Therefore, if \(S [i..j]\) contains an occurrence of an element $a$, then it includes the leftmost or rightmost occurrence of $a$ in some block.  Suppose $a$ is a $\tau$-majority in \(S [i..j]\), and \(b \geq t\).  For all \(i \leq k \leq j\), $a$ occurs at least $\tau 2^b \ge 2^{b - t}$ times in \(S [k - 2^{b + 1}..k + 2^{b + 1}]\), so since some occurrence of $a$ in \(S [i..j]\) is the
leftmost or rightmost in its block, it is flagged by a 1 in $G_b^t[i..j]$.
\end{proof}

The number of distinct elements that occur at least $2^{b - t}$ times in a range of size $\Oh{2^b}$ is $\Oh{2^t}$, so in each block there are $\Oh{2^t}$ positions flagged by 1s in $G_b^t$, for a total of $m=\Oh{n\,2^{t-b}}$ 1s.  It follows that we can store an instance of the structure of \citet{Pat08} (recall
Section~\ref{subsec:queries}) supporting $\Oh{1}$-time access, rank and select on $G_b^t$ in $\Oh{n 2^{t - b} (b - t) + n / \log^3 n}$ bits in total.  Summing over $t$ from 0 to \(\lceil \log \sigma \rceil\) and over $b$ from \(t\) to \(\lfloor \log n \rfloor\), calculation shows we use a total of \(\Oh{n \log \sigma}\) bits for the binary strings. 

\subsection{Queries} \label{sec:smallsigma-query}

Given endpoints $i$ and $j$ and a threshold $\tau$, if $\tau < 1/\sigma$, we
simply report every element $a \in [1..\sigma]$ such that $\rank_a(S,j)-
\rank_a(S,i-1) > \tau(j-i+1)$, in total time $\Oh{\sigma} = \Oh{1/\tau}$. 
Otherwise, we compute $b$ and $t$ as explained and, if $b < t$, we
run a sequential algorithm on \(S [i..j]\) in \(\Oh{j - i} = \Oh{1 / \tau}\) time \cite{MG82}. Otherwise, we use rank and select on $G_b^t$ to find all the 1s in $G_b^t[i..j]$. Since \(S [i..j]\) overlaps at most 5 blocks of length $2^{b - 1}$, it contains $\Oh{1 / \tau}$ elements flagged by 1s in $G_b^t$; therefore, we have $\Oh{1/\tau}$ candidates to evaluate, and these include all the
possible $\tau$-majorities. Each candidate $a$ is tested in constant
time for the condition $\rank_a(S,j)-\rank_a(S,i-1) > \tau(j-i+1)$.

We aim to run the sequential algorithm in $\Oh{j-i}=\Oh{1/\tau}$ worst-case 
time and space, which we achieve by taking advantage of the rank and select 
operations on $S$. We create a doubly-linked list with the positions $i$ to 
$j$, plus an array $T[1..j-i+1]$ where $T[k]$ points to the list node 
representing $S[i+k-1]$. We take the element $S[i]=a$ at 
the head of the list and know that it is a $\tau$-majority in $S[i..j]$ if 
$\rank_a(S,j)-\rank_a(S,i-1) > \tau(j-i+1)$. If it is, we immediately report 
it. In any case, we remove all the occurrences of $a$ from the doubly-linked 
list, that is, the list nodes $T[\select_a(S,\rank_a(S,i)+r)]$, $r=0,1,2,\ldots$.
We proceed with the new header of the doubly-linked list, which points to a 
different element $S[i']=a'$, and so on. It is clear that we perform $\Oh{j-i}$
constant-time rank and select operations on $S$, and that at the end we have 
found all the $\tau$-majorities.

\begin{theorem} \label{thm:linear-smallsigma}
Let $S[1..n]$ be a string over alphabet $[1..\sigma]$, with 
$\lg\sigma=\Oh{\lg w}$. We
can store $S$ in \(\Oh{n\lg\sigma}\) bits such that later, given the endpoints 
of a range and $\tau$, we can return the $\tau$-majorities for that range in 
time $\Oh{1 / \tau}$.
\end{theorem}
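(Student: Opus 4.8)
The statement of Theorem~\ref{thm:linear-smallsigma} essentially collects together the structure and query pieces already built in Sections~\ref{sec:smallsigma-struc} and~\ref{sec:smallsigma-query}, so the plan is to verify that the space accounting and the query time both come out as claimed. First I would set up the data structures exactly as in the structure subsection: an instance of \citet[Thm.\ 5]{BN15} for $\Oh{1}$-time access, rank, and select on $S$, costing $n\lg\sigma+o(n)$ bits, together with the family of binary strings $G_b^t$ for $0\le t\le\lceil\log\sigma\rceil$ and $t\le b\le\lfloor\log n\rfloor$, each represented with the structure of \citet{Pat08}.

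The key quantitative step is the space bound for the $G_b^t$ strings. I would argue that in any window of size $\Oh{2^b}$ at most $\Oh{2^t}$ distinct elements can occur $2^{b-t}$ times each, so each block of length $2^{b-1}$ contributes $\Oh{2^t}$ flagged positions; summing over the $\Oh{n/2^{b-1}}$ blocks gives $m=\Oh{n\,2^{t-b}}$ ones in $G_b^t$. Using \citet{Pat08}'s bound of $\Oh{m\lg\frac{n}{m}+n/\log^c n}$ bits with $c=3$, the cost of $G_b^t$ is $\Oh{n\,2^{t-b}(b-t)+n/\log^3 n}$ bits. Then I would carry out the double summation: for fixed $t$, $\sum_{b\ge t} 2^{t-b}(b-t)$ is a constant (a convergent series $\sum_{d\ge 0} d\,2^{-d}$), so the dominant term sums to $\Oh{n}$ per value of $t$, and over the $\Oh{\log\sigma}$ values of $t$ this yields $\Oh{n\log\sigma}$. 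The residual $n/\log^3 n$ terms, summed over $\Oh{\log\sigma\cdot\log n}=\Oh{\log^2 n}$ index pairs, contribute $\Oh{n/\log n}=o(n)$ bits. Hence the total is $\Oh{n\lg\sigma}$ bits.

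Next I would handle the query-time analysis by case distinction, exactly mirroring the query subsection. If $\tau<1/\sigma$, I scan all $\sigma$ symbols, each tested in $\Oh{1}$ time via two rank queries on $S$, for total time $\Oh{\sigma}=\Oh{1/\tau}$. Otherwise I set $t=\lceil\log(1/\tau)\rceil$ and $b=\lfloor\log(j-i+1)\rfloor$; if $b<t$ the range is short enough that the Misra--Gries sequential algorithm \cite{MG82}, run over the list-and-array structure with $\Oh{1}$-time rank and select on $S$, finishes in $\Oh{j-i}=\Oh{1/\tau}$ time and space. In the main case $b\ge t$, I would invoke the flagging lemma to argue that every $\tau$-majority of $S[i..j]$ carries a $1$ in $G_b^t[i..j]$, so it suffices to enumerate those $1$s; since $S[i..j]$ overlaps at most $5$ blocks and each block holds $\Oh{2^t}=\Oh{1/\tau}$ flagged positions, there are $\Oh{1/\tau}$ candidates, found by $\Oh{1/\tau}$ rank/select operations on $G_b^t$. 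Each candidate is verified in $\Oh{1}$ time by the frequency test $\rank_a(S,j)-\rank_a(S,i-1)>\tau(j-i+1)$.

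The main obstacle, and the step I would treat most carefully, is the space summation: it is the only nonroutine calculation, and the key insight is that the two nested sums decouple so that the geometric decay in $2^{t-b}$ tames the linear factor $(b-t)$ to a constant per value of $t$, leaving only the outer $\Oh{\log\sigma}$ factor. I would also double-check that the choice $c=3$ in \citet{Pat08} keeps the aggregate of the $n/\log^c n$ error terms at $o(n)$ despite there being $\Oh{\log^2 n}$ of them, since a careless choice (e.g.\ $c=1$) would leak a $\Theta(n)$-sized or even superlinear lower-order term. Everything else is bookkeeping: the correctness of candidate enumeration rests on the already-proved flagging lemma, and the $\Oh{1/\tau}$ verification time rests on the $\Oh{1}$-time rank guarantee that holds precisely because $\lg\sigma=\Oh{\lg w}$.
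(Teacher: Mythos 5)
Your proposal is correct and follows essentially the same route as the paper: the same structures ($S$ via \citet[Thm.\ 5]{BN15} plus the flagged bitvectors $G_b^t$ in \citet{Pat08}'s representation), the same three-way case split at query time, and the same candidate-enumeration-plus-rank-verification argument. You also correctly fill in the double summation that the paper leaves as ``calculation shows,'' including the observation that $c=3$ keeps the $\Oh{\log^2 n}$ residual terms at $o(n)$.
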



\subsection{Succinct space} \label{sec:smallsigma-succinct}

To reduce the space we will open the structure we are using to represent $S$ 
\cite[Thm.\ 5]{BN15}. This is a multiary 
wavelet tree: it cuts the alphabet range $[1..\sigma]$ into $w^\beta$ contiguous
subranges of about the same size, for some conveniently small constant 
$0<\beta\le 1/4$. The root node $v$ of the wavelet tree stores the sequence
$S_v[1..n]$ indicating the range to which each symbol of $S$ belongs, roughly
$S_v[i] = \lceil S[i]/(\sigma/w^\beta)\rceil$. This node has $w^\beta$ 
children, where the $p$th child stores the subsequence of the symbols $S[i]$ 
such that $S_v[i]=p$. The alphabet of each child has been reduced to a range 
of size roughly $\sigma / w^\beta$. This range is split again into $w^\beta$ 
subranges, creating $w^\beta$ children for each child, and so on. The process 
is repeated recursively until the alphabet range is of size less than 
$w^\beta$. The wavelet tree has height $\log_{w^\beta} \sigma = 
\frac{\lg \sigma}{\beta\lg w} =\Oh{1}$, and at each level the strings $S_v$ 
stored add up to $n\lg(w^\beta) = \beta n \lg w$ bits, for a total of
$n\lg\sigma$ bits of space. The other $o(n)$ bits are needed to provide 
constant-time rank and select support on the strings $S_v$.

We use this hierarchical structure to find the $\tau$-majorities as follows.
Assume we have the structures to find $\tau$-majorities in any of the strings
$S_v$ associated with wavelet tree nodes $v$, in time $\Oh{1/\tau}$. 
Then, if $a$ is a $\tau$-majority in $S[i..j]$, the symbol 
$p=\lceil a/(\sigma/w^\beta)\rceil$ is also a $\tau$-majority in $S_v[i..j]$, 
where $v$ is the wavelet tree root. Therefore, we find in time $\Oh{1/\tau}$ 
the $\tau$-majorities $p$ in $S_v[i..j]$.
We verify each such $\tau$-majority $p$ recursively in the $p$th child of $v$.
In this child $u$, the range $S_v[i..j]$ is projected to $S_u[i_u..j_u] = 
S_u[\rank_p(S_v,i-1)+1,\rank_p(S_v,j)]$, and the corresponding threshold is
$\tau_u = \tau (j-i+1)/(j_u-i_u+1)<1$. This process continues recursively until
we find the majorities in the leaf nodes, which correspond to
actual symbols that can be reported as $\tau$-majorities in $S[i..j]$.

The time to find the $\tau_u$-majorities in each child $u$ of the root $v$ is
$\Oh{1/\tau_u}=\Oh{(j_u-i_u+1)/((j-i+1)\tau)}$. Added over all the children
$u$, this gives $\sum_u \Oh{1/\tau_u} = \sum_u \Oh{(j_u-i_u+1)/((j-i+1)\tau)} = \Oh{1/\tau}$. Adding this over all the levels, we obtain
$\Oh{(1/\beta)(1/\tau)}=\Oh{1/\tau}$.

\paragraph{Finding $\tau'$-majorities on tiny alphabets}

The remaining problem is how to find $\tau'$-majorities on an alphabet of size
$\sigma'=w^\beta$, on each of the strings $S_v$ of length $n_v$. We do almost 
as we did for Theorem~\ref{thm:linear-smallsigma}, except that the range for $b$ is 
slightly narrower: \(\lfloor \log (2^t \cdot w^\beta/4) \rfloor \leq b \leq 
\lfloor \log n_v \rfloor\).
Then calculation shows that the total space for the bitvectors $G_b^t$ is 
\(\Oh{\frac{n_v \log \sigma' \lg w}{w^\beta} + \frac{n_v}{\log n_v}} = 
o(n_v)\), so added over the whole wavelet tree is $o(n)$. 

The price of using this higher lower bound for $b$ is that it requires us to 
sequentially find $\tau'$-majorities in time $\Oh{1/\tau'}$ on ranges of length
$\Oh{(1/\tau') w^\beta}$. However, we can take advantage of the small alphabet. 
First, if $1/\tau' \ge \sigma'$, we just perform $\sigma'$ pairs of 
constant-time rank queries on $S_v$. For $1/\tau' < \sigma'$, we will compute
an array of $\sigma'$ counters with the frequency of the symbols in the range, 
and then report those exceeding the threshold. The maximum
size of the range is $(4/\tau')w^\beta/4 \le \sigma' w^\beta=w^{2\beta}$, and
thus $2\beta\lg w$ bits suffice to represent each counter. The $\sigma'$ 
counters then require $2\beta w^\beta\lg w$ bits and can be maintained in a 
computer word (although we will store them somewhat spaced for technical
reasons). We can read the elements in $S_v$ by chunks of $w^\beta$ 
symbols, and compute in constant time the corresponding
counters for those symbols. Then we sum the current counters and the 
counters for the chunk, all in constant time because they are fields in a 
single computer word. The range is then processed in time $\Oh{1/\tau'}$. 

To compute the counters corresponding to $w^\beta$ symbols, we extend the
popcounting algorithm of \citet[Sec.\ 4.1]{BN15}. Assume we extract them from
$S_v$ and have them packed in the lowest $k\ell$ bits of a computer word $X$, 
where $k=w^\beta$ is the number of symbols and $\ell=\lg\sigma'$ the number of
bits used per symbol. We first create $\sigma'$ copies of the sequence at 
distance $2k\ell$ of each other:
$X \leftarrow X \cdot (0^{2k\ell-1}1)^{\sigma'}$. 
In each copy we will count the occurrences of a different symbol.
To have the $i$th copy count the occurrences of symbol $i$, for $0 \le i <
\sigma'$, we perform
$$X ~\leftarrow~ X~~\textsc{xor}~~0^{k\ell} ((\sigma'-1)_\ell)^k \ldots 0^{k\ell} (2_\ell)^k ~ 0^{k\ell} (1_\ell)^k ~ 0^{k\ell} (0_\ell)^k,$$
where $i_\ell$ is number $i$ written in $\ell$ bits.
Thus in the $i$th copy the symbols equal to $i$ become zero and the others
nonzero. To set a 1 at the highest bit of the symbols equal to $i$ in the $i$th
copy, we do
$$X ~\leftarrow~ (Y - (X~\textsc{and~not}~Y))~\textsc{and}~Y~\textsc{and~not}~X,$$ 
where $Y=(0^{k\ell} (10^{\ell-1})^k)^{\sigma'}$.%
\footnote{This could have been simply $X \leftarrow (Y-X)~\textsc{and}~Y$ if 
there was an unused highest bit set to zero in the fields of $X$. Instead,
we have to use this more complex formula that first zeroes the 
highest bit of the fields and later considers them separately.}
Now we add all the 1s in each copy with
$X \leftarrow X \cdot 0^{k\ell(2\sigma'-1)} (0^{\ell-1}1)^k$.
This spreads several sums across the $2k\ell$ bits of each copy, and in 
particular the $k$th sum adds up all the 1s of the copy. Each sum requires 
$\lg k$ bits, which is precisely the $\ell$ bits we have allocated per field.
Finally, we isolate the desired counters using
$X \leftarrow X~\textsc{and}~(0^{k\ell}1^\ell0^{(k-1)\ell})^{\sigma'}$.
The $\sigma'$ counters are not contiguous in the computer word, but we still
can afford to store them spaced: we use $2k\ell\sigma' = 
2 \beta w^{2\beta}\lg w$ bits, which since $\beta \le 1/4$, is always less 
than $w$.

The cumulative counters, as said, need $\lg (\sigma' w^\beta) = 2\ell$
bits. We will store them in a computer word $C$ separated 
by $2k\ell$ bits so that we can directly add the resulting word $X$ after 
processing a chunk of $w^\beta$ symbols of the range in $S_v$: 
$C \leftarrow C+X$. If the last chunk is of length $l < w^\beta$, we complete
it with zeros and then subtract those spurious $w^\beta-l$ occurrences from the
first counter, 
$C \leftarrow C - (w^\beta-l) \cdot 2^{(k-1)\ell}$.

The last challenge is to output the counters that are at least 
$y = \lfloor \tau' (j-i+1) \rfloor+1$ after processing the range. We
use 
$$C \leftarrow C + (2^{2\ell}-y) \cdot (0^{k\ell+\ell-1} 1 0^{(k-1)\ell})^{\sigma'}$$
so that the counters reaching $y$ will overflow to the next bit. We isolate
those overflow bits with
$C \leftarrow C~\textsc{and}~(0^{(k-1)\ell-1} 1 0^{(k+1)\ell})^{\sigma'}$,
so that we have to report the $i$th symbol if and only if 
$C~\textsc{and}~0^{(k(2\sigma'-2i+1)-1)\ell-1} 1 0^{(k(2i-1)+1)\ell} \not= 0$.
We repeatedly isolate the lowest bit of $C$ with 
$$D \leftarrow (C~\textsc{xor}~(C-1))~\textsc{and}~
(0^{(k-1)\ell-1} 1 0^{(k+1)\ell})^{\sigma'},$$
and then remove it with $C \leftarrow C~\textsc{and}~(C-1)$, until $C=0$.
Once we have a position isolated in $D$, we find the position in constant
time by using a monotone minimum perfect hash function over the set 
$\{ 2^{(k(2i-1)+1)\ell},~1 \le i \le \sigma'\}$, which uses
$\Oh{\sigma' \lg w}=o(w)$ bits \cite{BBPV09}. Only one such data structure
is needed for all the sequences, and it takes less space than a single 
systemwide pointer.

\begin{theorem} \label{thm:succinct-smallsigma}
Let $S[1..n]$ be a string over alphabet $[1..\sigma]$, with 
$\lg\sigma=\Oh{\lg w}$. We can store $S$ in \(n \lg\sigma + o(n)\) bits such 
that later, given the endpoints of a range and $\tau$, we can return the 
$\tau$-majorities for that range in time $\Oh{1/\tau}$.
\end{theorem}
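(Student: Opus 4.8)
The plan is to assemble the machinery developed above into a single recursive structure built on top of the multiary wavelet tree of \citet[Thm.\ 5]{BN15}. That representation stores $S$ in $n\lg\sigma+o(n)$ bits with $\Oh{1}$-time access, rank, and select, and it organizes $S$ into a tree of height $\log_{w^\beta}\sigma=\Oh{1}$ whose nodes $v$ hold sequences $S_v$ over alphabets of size at most $\sigma'=w^\beta$. The wavelet tree already meets the leading $n\lg\sigma$ term of the space bound with only $o(n)$ redundancy, so I can afford to add auxiliary structures of total size $o(n)$: concretely, one family of bitvectors $G_b^t$ per node, built exactly as in Theorem~\ref{thm:linear-smallsigma} but over the narrowed range $\lfloor\log(2^t\cdot w^\beta/4)\rfloor\le b\le\lfloor\log n_v\rfloor$, whose summed space is $o(n_v)$ per node and hence $o(n)$ over the whole tree. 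This keeps the total at $n\lg\sigma+o(n)$ bits.

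For the query I would reduce $\tau$-majority on $S$ to the same problem on the tiny sequences $S_v$. The correctness hinge is that if $a$ is a $\tau$-majority of $S[i..j]$, then the symbol $p$ naming the subrange containing $a$ is a $\tau$-majority of $S_v[i..j]$ at the root $v$; descending into the $p$th child $u$ projects the range to $S_u[i_u..j_u]$ and raises the threshold to $\tau_u=\tau(j-i+1)/(j_u-i_u+1)\ge\tau$, so each leaf we reach corresponds to an actual $\tau$-majority and no genuine majority is lost along the way. The cost at a node is $\Oh{1/\tau_u}$; since the projected ranges of the children we visit are disjoint subsequences of $[i..j]$, we have $\sum_u(j_u-i_u+1)\le j-i+1$, so the costs at the children of any node sum to $\Oh{1/\tau}$, and with $\Oh{1}$ levels the whole query costs $\Oh{1/\tau}$.

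The hard part is the base case: solving $\tau'$-majority on a single $S_v$ over the alphabet $\sigma'=w^\beta$ in $\Oh{1/\tau'}$ time. The $G_b^t$ bitvectors supply $\Oh{1/\tau'}$ candidates whenever $b\ge t$, each checked by two rank queries, but the narrowed $b$-range forces us to treat short ranges—of length up to $\Oh{(1/\tau')w^\beta}$—by a sequential count rather than by the bitvectors. To keep this within $\Oh{1/\tau'}$ I would exploit that $\sigma'=w^\beta$ symbols, and a counter block of $2\beta w^{2\beta}\lg w<w$ bits, both fit in a single machine word: reading $S_v$ in chunks of $w^\beta$ symbols, I compute the per-chunk frequency vector in constant time by the broadcast-then-mask word routine extending \citet[Sec.\ 4.1]{BN15}, accumulate it into the cumulative-counter word, and finally extract the counters that overflow $\lfloor\tau'(j-i+1)\rfloor+1$ one at a time, locating each in constant time with a monotone minimal perfect hash function over the fixed field positions \cite{BBPV09}. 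The main obstacle is verifying that each of these word operations is correct and deposits its result in the intended field—in particular the overflow-extraction step and the field spacing needed to prevent carries from leaking between counters; once that is confirmed, the theorem follows by direct composition of the pieces already in hand.
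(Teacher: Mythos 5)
Your proposal follows essentially the same route as the paper's own proof: the constant-height multiary wavelet tree with recursive majority queries and rescaled thresholds $\tau_u$, the narrowed-$b$ family of bitvectors $G_b^t$ at each node contributing only $o(n)$ total, and word-packed counters extracted via a monotone minimal perfect hash function for the short ranges handled sequentially. The one piece you leave unverified---the exact bit-level formulas for broadcasting, overflow extraction, and field spacing (plus the easy subcase $1/\tau' \ge \sigma'$, dispatched by $\sigma'$ rank queries)---is precisely what the paper writes out explicitly, so there is no substantive difference in approach.
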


\subsection{Optimally compressed space} \label{sec:smallsigma-compressed}

One choice to compress the space is to use a compressed representation of
the strings $S_v$ \cite{FMMN07}. This takes chunks of $c=(\lg n)/2$ bits
and assigns them a code formed by a header of $\lg c$ bits and a variable-length
remainder of at most $c$ bits. For 
decoding a chunk in constant time, they use a directory of 
$\Oh{n_v\lg c/c}$ bits, plus a constant table of size $2^c=\Oh{\sqrt{n}}$ that 
receives any encoded string and returns the original chunk.
The compressed size of any string $S_v$ with entropy $H_v$ then 
becomes $n_vH_v+ \Oh{n_v\log\sigma'\log\log n/\log n}$ bits, which added over
the whole wavelet tree becomes $nH+\Oh{\frac{n\log\sigma\log\log n}{\log n}}$
bits.
This can be used in replacement of the direct representation of 
sequences $S_v$ in Theorem~\ref{thm:succinct-smallsigma}, since we only change
the way a chunk of $\Theta(\lg n)$ bits is read from any $S_v$. Note that we
read chunks of $w^\beta$ symbols from $S_v$, which could be $\omega(\lg n)$ if
$n$ is very small. To avoid this problem, we apply this method only when
$\lg\sigma = \Oh{\lg\lg n}$, as in this case we can use computer words of
$w=\lg n$ bits.

\begin{corollary} \label{cor:compressed-smallsigma}
Let $S[1..n]$ be a string whose distribution of symbols has entropy $H$, over
alphabet $[1..\sigma]$, with $\lg\sigma=\Oh{\lg\lg n}$. We can store $S$ in 
\(nH+ o(n)\) bits such that later, given the endpoints of a range and $\tau$, 
we can return the $\tau$-majorities for that range in time $\Oh{1/\tau}$.
\end{corollary}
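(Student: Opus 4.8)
The plan is to keep the entire query algorithm of Theorem~\ref{thm:succinct-smallsigma} unchanged and to compress only the way the wavelet-tree sequences $S_v$ are stored. Recall that in that construction the $n\lg\sigma$-bit bulk of the space comes from storing each $S_v$ as a plain packed array over its alphabet of size $\sigma'=w^\beta$, while the query only needs (i) constant-time rank and select on each $S_v$, and (ii) the ability to read a chunk of $w^\beta$ consecutive symbols of $S_v$ into one machine word for the tiny-alphabet majority routine. First I would replace each packed $S_v$ by the entropy-compressed representation of \citet{FMMN07}, which splits the bit-string into blocks of $c=(\lg n)/2$ bits, stores each block as a $(\lg c)$-bit header plus a variable-length remainder, and supports constant-time extraction of any $\Theta(\lg n)$-bit window using a directory of $\Oh{n_v\lg c/c}$ bits together with a single global decoding table of size $2^{c}=\Oh{\sqrt n}$ shared across all nodes. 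The auxiliary structures — the $G_b^t$ bitvectors, the rank/select directories, and the perfect hash function — are already $o(n)$, so they do not affect the leading term.

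For the space bound I would argue in two stages. The compressed size of $S_v$ is $n_vH_v$ plus a redundancy of $\Oh{n_v\lg\sigma'\lg\lg n/\lg n}$ bits. By the standard wavelet-tree identity that the zeroth-order entropies of the node sequences telescope, $\sum_v n_vH_v=nH$, so the leading term is exactly $nH$. For the redundancy, since $\sum_v n_v\lg\sigma'=n\lg\sigma$, the sum over the whole tree is $\Oh{n\lg\sigma\lg\lg n/\lg n}$; under the hypothesis $\lg\sigma=\Oh{\lg\lg n}$ this is $\Oh{n(\lg\lg n)^2/\lg n}=o(n)$, giving the claimed $nH+o(n)$ bits.

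The step I expect to be the real obstacle is reconciling two incompatible granularities. The tiny-alphabet routine processes $w^\beta$ symbols at a time, i.e.\ a window of $\Theta(w^\beta\lg\sigma')$ bits packed into a single word, whereas \citet{FMMN07}'s representation only guarantees constant-time access to windows of $\Theta(\lg n)$ bits; when $n$ is small relative to $w$ a window of $w^\beta$ symbols can be $\omega(\lg n)$ bits wide, so the extraction would no longer be constant time. I would resolve this by forcing the word size used inside this component down to $w=\lg n$: then a window of $w^\beta$ symbols occupies $\Oh{(\lg n)^{1/4}\lg\lg n}=o(\lg n)$ bits (since $\beta\le 1/4$), which fits comfortably inside a single chunk. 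The price is that the wavelet tree has $\Oh{1}$ height only when $\log_{w^\beta}\sigma=\lg\sigma/(\beta\lg w)=\Oh{1}$, i.e.\ when $\lg\sigma=\Oh{\lg w}=\Oh{\lg\lg n}$; this is exactly the hypothesis of the corollary and explains why the condition is narrower than $\lg\sigma=\Oh{\lg w}$.

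The remaining thing to check carefully is that, with $w=\lg n$, the word-parallel popcounting and counter manipulations of the tiny-alphabet routine still execute in $\Oh{1}$ time per chunk, and that the top-down recursion over the $\Oh{1}$ levels still sums to $\Oh{1/\tau}$, so that compressing the storage has changed neither the correctness nor the query time inherited from Theorem~\ref{thm:succinct-smallsigma}. Constant-time access to $S$ is likewise preserved, since \citeauthor{FMMN07}'s blocks are decoded in $\Oh{1}$ time.
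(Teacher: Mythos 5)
Your proposal is correct and follows essentially the same route as the paper: replace each $S_v$ by the \citet{FMMN07} entropy-compressed representation, observe that the node entropies sum to $nH$ while the redundancy sums to $\Oh{n\lg\sigma\lg\lg n/\lg n}=o(n)$ under $\lg\sigma=\Oh{\lg\lg n}$, and handle the granularity mismatch between $w^\beta$-symbol chunks and $\Theta(\lg n)$-bit decodable windows by working with word size $w=\lg n$, which is exactly why the hypothesis is narrowed to $\lg\sigma=\Oh{\lg\lg n}$. Your write-up is somewhat more explicit than the paper's (e.g., verifying that a chunk of $w^\beta$ symbols occupies $o(\lg n)$ bits), but there is no substantive difference in approach.
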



For the case where $\lg\sigma=\omega(\lg\lg n)$ but still $\lg\sigma =
\Oh{\lg w}$, we use another technique. We represent $S$ using the optimally
compressed structure of \citet{BCGNN13}. This structure separates the alphabet 
symbols into $\lg^2 n$ classes according to their frequencies. A sequence 
$K[1..n]$, where $K[i]$ is the class to which $S[i]$ is assigned, is represented
using the structure of Corollary~\ref{cor:compressed-smallsigma}, which 
supports constant-time access, rank, and select, since the alphabet of $K$ is 
of polylogarithmic size, and also $\tau$-majority queries in time 
$\Oh{1/\tau}$. For each class $c$, a sequence $S_c[1..n_c]$ contains the 
subsequence of $S$ of the symbols $S[i]$ where $K[i]=c$.
We will represent the subsequences $S_c$ using
Theorem~\ref{thm:succinct-smallsigma}. Then the structure for $K$ takes 
$nH_K+o(n)$ bits, where $H_K$ is the entropy of the distribution of the
symbols in $K$, and the structures for the strings $S_c$ take 
$n_c\lg\sigma_c + o(n_c)$ bits, where $S_c$ ranges over alphabet 
$[1..\sigma_c]$. \citeauthor{BCGNN13} show that these space bounds
add up to $nH + o(n)$ bits and that one can support access, rank and select on 
$S$ via access, rank and select on $K$ and some $S_c$.

Our strategy to solve a $\tau$-majority query on $S[i..j]$ resembles the one
used to prove Theorem~\ref{thm:succinct-smallsigma}.
We first run a $\tau$-majority query on string $K$. This will yield
the at most $1/\tau$ classes of symbols that, together, occur more than
$\tau(j-i+1)$ times in $S[i..j]$. The classes excluded from this result cannot
contain symbols that are $\tau$-majorities. Now, for each included class $c$,
we map the interval $S[i..j]$ to $S_c[i_c..j_c]$ in the subsequence of its
class, where $i_c = \rank_c(K,i-1)+1$ and $j_c = \rank_c(K,j)$, and then 
run a $\tau_c$-majority query on $S_c[i_c..j_c]$, for 
$\tau_c = \tau(j-i+1)/(j_c-i_c+1)$. The results obtained for each 
considered class $c$ are reported as $\tau$-majorities in $S[i..j]$.
The query time, added over all the possible $\tau_c$ values, is 
$\sum_c \Oh{1/\tau_c} = \Oh{1/\tau}$ as before.

\begin{theorem} \label{thm:compressed-smallsigma}
Let $S[1..n]$ be a string whose distribution of symbols has entropy $H$, over
alphabet $[1..\sigma]$, with $\lg\sigma=\Oh{\lg w}$. We can store $S$ in 
\(n H + o(n)\) bits such that later, given the endpoints of a range and $\tau$,
we can return the $\tau$-majorities for that range in time $\Oh{1/\tau}$.
\end{theorem}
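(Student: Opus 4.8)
The plan is to treat the statement as the union of two alphabet regimes and dispatch each to machinery already in hand. When $\lg\sigma=\Oh{\lg\lg n}$ I would simply invoke Corollary~\ref{cor:compressed-smallsigma}, which already delivers $nH+o(n)$ bits and $\Oh{1/\tau}$ query time in exactly this range. The substance is therefore the complementary regime $\lg\sigma=\omega(\lg\lg n)$ with $\lg\sigma=\Oh{\lg w}$, where the polylogarithmic-alphabet compression of Corollary~\ref{cor:compressed-smallsigma} can no longer be applied directly to $S$ itself, and I must compress while still reducing the candidate set to $\Oh{1/\tau}$.

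For that regime I would layer the frequency-class decomposition of \citet{BCGNN13} on top of the two succinct constructions already proved. First I would partition $[1..\sigma]$ into $\lg^2 n$ classes by symbol frequency and form the class-indicator string $K[1..n]$, whose alphabet is polylogarithmic; I would represent $K$ with the structure of Corollary~\ref{cor:compressed-smallsigma}, obtaining constant-time \rank/\select\ together with $\tau$-majority queries on $K$ in $\Oh{1/\tau}$ time within $nH_K+o(n)$ bits. Second, for each class $c$ I would store the subsequence $S_c[1..n_c]$ of the symbols assigned to $c$ with Theorem~\ref{thm:succinct-smallsigma}, which is legitimate because $\sigma_c\le\sigma$ keeps $\lg\sigma_c=\Oh{\lg w}$; this costs $n_c\lg\sigma_c+o(n_c)$ bits and supports $\tau$-majority queries on $S_c$ in $\Oh{1/\tau}$ time. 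The space check then reduces to the identity $nH_K+\sum_c n_c\lg\sigma_c = nH$ up to lower-order terms, which is exactly the accounting established by \citeauthor{BCGNN13}; I would cite their decomposition rather than reprove it.

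The query would proceed in two layers mirroring the proof of Theorem~\ref{thm:succinct-smallsigma}. Given $S[i..j]$ and $\tau$, I would first obtain the $\Oh{1/\tau}$ candidate classes by a $\tau$-majority query on $K[i..j]$; the correctness observation is that a class is discarded only if its total frequency in the range is at most $\tau(j-i+1)$, in which case no symbol inside it can be a $\tau$-majority. For each surviving class $c$ I would map the range through \rank\ into $S_c[i_c..j_c]$ with $i_c=\rank_c(K,i-1)+1$ and $j_c=\rank_c(K,j)$, and run a $\tau_c$-majority query there with the rescaled threshold $\tau_c=\tau(j-i+1)/(j_c-i_c+1)$. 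Since $S_c$ collects exactly the class-$c$ occurrences, any symbol occurs equally often in $S[i..j]$ and in $S_c[i_c..j_c]$, so a symbol $a$ is a $\tau$-majority of $S[i..j]$ iff it is a $\tau_c$-majority of $S_c[i_c..j_c]$; reporting the latter is thus both correct and complete.

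The main obstacle is the time bound rather than the correctness. Naively each sub-query costs $\Oh{1/\tau_c}$, and one must see that these sum to $\Oh{1/\tau}$ rather than growing with the number of surviving classes. The payoff comes from the rescaling itself: $1/\tau_c=(j_c-i_c+1)/(\tau(j-i+1))$, and because the sub-ranges $[i_c..j_c]$ are disjoint pieces of $[i..j]$, the lengths $j_c-i_c+1$ sum to at most $j-i+1$, whence $\sum_c 1/\tau_c\le 1/\tau$; adding the $\Oh{1/\tau}$ cost of the query on $K$ leaves the total at $\Oh{1/\tau}$. I would present this telescoping as the crux and relegate the space arithmetic and the two-regime bookkeeping to routine supporting calculation.
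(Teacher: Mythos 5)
Your proposal is correct and follows essentially the same route as the paper: Corollary~\ref{cor:compressed-smallsigma} for $\lg\sigma=\Oh{\lg\lg n}$, and for the remaining regime the alphabet-partitioning of \citet{BCGNN13} with $K$ represented via Corollary~\ref{cor:compressed-smallsigma}, each $S_c$ via Theorem~\ref{thm:succinct-smallsigma}, and the same rescaled thresholds $\tau_c$ whose costs telescope to $\Oh{1/\tau}$. No gaps to report.
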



\section{Parameterized Range Majority on Large Alphabets}
\label{sec:largesigma}

Rank queries cannot be performed in constant time on large alphabets
\cite{BN15}.
To obtain optimal query time in this case, we resort to the use of 
Lemma~\ref{lem:check} instead of performing rank queries on $S$. For 
this purpose, we must
be able to find the leftmost occurrence of each $\tau$-majority in a range. 
This is done by adding further structures on top of the bitvectors $G_b^t$ used
in Theorem~\ref{thm:linear-smallsigma}. Those bitvectors $G_b^t$ alone require 
$\Oh{n\lg\sigma}$ bits of space, whereas our further structures add only $o(n)$
bits. Within these $\Oh{n\lg\sigma}$ bits, we can store a simple representation 
of $S$ \cite{BCGNN13}, which supports both access 
and select queries in constant time. We also add the structures
to support partial rank in constant time, within $o(n\lg\sigma)$ further
bits. Therefore we can apply Lemma~\ref{lem:check} in constant time and solve
$\tau$-majority queries in time $\Oh{1/\tau}$.
We consider compression later.

\subsection{Structure}

First, to cover the case $\tau < 1/\sigma$, we build the structure of 
\citet{Mut02} on $S$, using $\Oh{n}$ extra bits as shown in 
Section~\ref{subsec:listing}, so that we can find the $\Oh{\sigma}=\Oh{1/\tau}$
leftmost occurrences of each distinct element in $S[i..j]$. On each leftmost
occurrence we can then apply Lemma~\ref{lem:check} in constant time. Now we 
focus on the case $\tau \ge 1/\sigma$. 

In addition to the bitvectors $G_b^t$ of the previous section,
we mark in a second bitvector $J_b^t$ each $(\lg^4 n)$-th occurrence in 
$S$ of the alphabet symbols in the area where they mark bits in $G_b^t$. 
More precisely, let $a=S[k]$ occur at least $2^{b-t}$ times in 
$S[k-2^{b+1}..k+2^{b+1}]$, and let $i_1, i_2, \ldots$ be the positions of 
$a$ in $S$. Then we mark in $J_b^t$ the positions $\{ i_{q\lg^4 n}, k-2^{b+1} 
\le i_{q\lg^4 n} \le k+2^{b+1} \}$.

For the subsequence $S_b^t$ of elements of $S$ marked in $J_b^t$, we build an 
instance of \citeauthor{Mut02}'s structure. 
That is, we build the structure on the array $C_b^t$ corresponding to the 
string $S_b^t[k]=S[\select_1(J_b^t,k)]$. This string need not be stored 
explicitly, but instead we store $C_b^t$ in explicit form.

Furthermore, if for any $b$ and $t$ it holds $J_b^t[i_{q\lg^4 n}]=1$, being 
$S[i_{q\lg^4 n}]=a$, we create a succinct SB-tree \cite[Lem~3.3]{GRR09} 
successor structure\footnote{In that paper they find predecessors, but the 
problem is analogous.} associated with the chunk of $\lg^4 n$ consecutive 
positions of $a$: $i_{1+(q-1)\lg^4 n}, \ldots, i_{q\lg^4 n}$. This structure is 
stored associated with the 1 at $J_b^t[i_{q\lg^4 n}]$ (all the 1s at the
same position $i_{q\lg^4 n}$, for different $b$ and $t$ values, point to
the same succinct SB-tree, as it does not depend on $b$ or $t$). 
The SB-tree operates
in time $\Oh{\lg(\lg^4 n)/\lg\lg n}=\Oh{1}$ and uses $\Oh{\lg^4 n \lg\lg n}$
bits. It needs constant-time access to the positions $i_{r+(q-1)\lg^4 n}$, as it
does not store them. We provide those positions using $i_k=\select_a(S,k)$. 

Added over all the symbols $a$, occurring $n_a$ times in $S$, each bitvector 
$J_b^t$ contains $\sum_a \lfloor n_a/\lg^4 n \rfloor = \Oh{n/\lg^4 n}$ 1s. 
Thus, added over every $b$ and $t$, the bitvectors $J_b^t$, arrays $C_b^t$,
and pointers to succinct SB-trees (using $\Oh{\lg n}$ bits per pointer),
require $\Oh{n/\lg n} = o(n)$ bits. 
Each succinct SB-tree requires $\Oh{\lg^4 n \lg\lg n}$ bits,
and they may be built for $\Oh{n/\lg^4 n}$ chunks, adding up to 
$\Oh{n\lg\lg n}$ bits. This is $\Oh{n\lg\sigma}$ if we assume $\lg\sigma=
\Omega(\lg\lg n)$.

\subsection{Queries} \label{sec:largesigma-queries}

Given $i$ and $j$, we compute $b = \lfloor\lg(j-i+1)\rfloor$ and
$t= \lceil \lg(1/\tau)\rceil$, and find the $\Oh{1}$ blocks of length $2^b$ 
overlapping $S[i..j]$. As in the previous section, every $G_b^t[k]=1$ in 
$G_b^t[i..j]$ is a candidate to verify, but this time we need to find its
leftmost occurrence in $S[i..j]$.

To find the leftmost position of $a = S[k]$, we see if the positions $k$ and 
$i$ are in the same chunk. That is, we compute the chunk index
$q=\lceil \rank_a(S,k)/\lg^4 n\rceil$ of $k$ (via a partial rank on $S$) and its
limits $i_l =\select_a(S,(q-1)\lg^4 n)$ and $i_r=\select_a(S,q\lg^4 n)$. 
Then we see if $i_l < i \le i_r$. In this case, we use the succinct SB-tree
associated with $J_b^t[i_r]=1$ to find the successor of $i$ in time $\Oh{1}$. 
Then we use Lemma~\ref{lem:check} from that position to determine in
$\Oh{1}$ time if $a$ is a $\tau$-majority in $S[i..j]$.

If $k$ is not in the same chunk of $i$, we disregard it because,
in this case, there is an occurrence $S[i_l]=a$ in $S[i..j]$ that is marked 
in $J_b^t$. We will instead find separately the leftmost occurrence in 
$S[i..j]$ of any candidate $a$ that is marked in $J_b^t[i..j]$, as follows.
We apply \citeauthor{Mut02}'s algorithm on the 1s of $J_b^t[i..j]$,
to find the distinct elements of $S_b^t[\rank_1(J_b^t,i-1)+1,\rank_1(J_b^t,j)]$.
Thus we obtain the leftmost sampled occurrences in $S[i..j]$ of all the 
$\tau$-majorities, among other candidates. For each leftmost occurrence 
$S_b^t[k']$, it must be that $k=\select_1(J_b^t,k')$ is in the same chunk of $i$, and 
therefore we can find the successor of $i$ using the corresponding succinct
SB-tree in constant time, and then
verify the candidate using Lemma~\ref{lem:check}.

It follows from the construction of $J_b^t$ that the distinct elements sampled
in any $S[i..j]$ must appear at least $2^t$ times in an interval of size
$\Oh{2^b}$ containing $S[i..j]$, and so there can only be $\Oh{1/\tau}$
distinct sampled elements. Therefore, \citeauthor{Mut02}'s algorithm on $J_b^t[i..j]$
gives us $\Oh{1/\tau}$ candidates to verify, in time $\Oh{1/\tau}$.

When $b < t$, we use our sequential algorithm of 
Section~\ref{sec:smallsigma-query} with the only difference that, since we
always find the leftmost occurrence of each candidate in $S[i..j]$, we can use
Lemma~\ref{lem:check} to verify the $\tau$-majorities. Thus the algorithm uses 
only select and partial rank queries on $S$, and therefore it runs in
time $\Oh{1/\tau}$ as well.


\begin{theorem} \label{thm:linear-largesigma}
Let $S[1..n]$ be a string over alphabet $[1..\sigma]$, with 
$\lg\sigma=\Omega(\lg\lg n)$. We can store $S$ in $\Oh{n\lg\sigma}$
bits such that later, given the endpoints of a range and $\tau$, we can return 
the $\tau$-majorities for that range in time $\Oh{1/\tau}$.
\end{theorem}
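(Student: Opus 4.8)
The plan is to assemble the structures and query algorithm developed throughout this section and to verify that everything fits in $\Oh{n\lg\sigma}$ bits while answering in $\Oh{1/\tau}$ time. First I would store $S$ using the representation of \citet{BCGNN13} that supports access and select in $\Oh{1}$ time within $\Oh{n\lg\sigma}$ bits, augmented with a partial-rank structure occupying $o(n\lg\sigma)$ further bits. Together with Lemma~\ref{lem:check}, these let me test in constant time, given the leftmost occurrence of a candidate in $S[i..j]$, whether that candidate is a $\tau$-majority, using only select and partial-rank queries rather than full rank (which is unavailable in constant time on large alphabets).

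Next I would split into the two regimes of $\tau$. For $\tau < 1/\sigma$ I would rely on \citeauthor{Mut02}'s structure (built with $\Oh{n}$ extra bits as in Section~\ref{subsec:listing}) to list the $\Oh{\sigma}=\Oh{1/\tau}$ distinct elements of $S[i..j]$ together with their leftmost occurrences, and apply Lemma~\ref{lem:check} to each. For $\tau \ge 1/\sigma$ I would set $b = \lfloor\lg(j-i+1)\rfloor$ and $t = \lceil\lg(1/\tau)\rceil$. When $b \ge t$, the bitvectors $G_b^t$ restrict the candidates to the $\Oh{1/\tau}$ positions flagged in $G_b^t[i..j]$; when $b < t$, I would fall back on the sequential algorithm of Section~\ref{sec:smallsigma-query}, which again uses only select and partial-rank queries so that Lemma~\ref{lem:check} still applies.

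The crux — and the step I expect to be the main obstacle — is recovering the leftmost occurrence in $S[i..j]$ of each candidate, since we cannot afford a rank query on $S$. The key device is the sampling in $J_b^t$. For a candidate $a=S[k]$, I would first test, via a partial rank on $S$ and two select queries, whether $k$ lies in the same length-$\lg^4 n$ chunk of occurrences of $a$ as the position $i$; if so, the succinct SB-tree attached to that chunk returns the successor of $i$ among the occurrences of $a$ in $\Oh{1}$ time. If $k$ falls in a later chunk, then some sampled occurrence of $a$ already lies inside $S[i..j]$, so I would instead recover that candidate by running \citeauthor{Mut02}'s algorithm on the sampled subsequence $S_b^t$ restricted to $J_b^t[i..j]$; each sampled leftmost occurrence so obtained is guaranteed to share a chunk with $i$, so the SB-tree again yields the true leftmost occurrence. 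Since every sampled element appears at least $2^t$ times in a window of size $\Oh{2^b}$ containing $S[i..j]$, there are only $\Oh{1/\tau}$ of them, keeping the total query time at $\Oh{1/\tau}$.

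Finally I would check the space budget. The bitvectors $G_b^t$ dominate at $\Oh{n\lg\sigma}$ bits. Because each $J_b^t$ carries only $\sum_a \lfloor n_a/\lg^4 n\rfloor = \Oh{n/\lg^4 n}$ ones, the bitvectors $J_b^t$, the arrays $C_b^t$, and the $\Oh{\lg n}$-bit SB-tree pointers together contribute $\Oh{n/\lg n}=o(n)$ bits summed over all $b$ and $t$. The succinct SB-trees themselves are the binding constraint: each uses $\Oh{\lg^4 n\,\lg\lg n}$ bits over $\Oh{n/\lg^4 n}$ chunks, for $\Oh{n\lg\lg n}$ bits in total, which is absorbed into $\Oh{n\lg\sigma}$ precisely because of the hypothesis $\lg\sigma = \Omega(\lg\lg n)$.
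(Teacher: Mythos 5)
Your proposal is correct and follows essentially the same route as the paper: the same representation of $S$ with constant-time access, select and partial rank combined with Lemma~\ref{lem:check}, the same case split on $\tau$ and on $b$ versus $t$, the same use of the sampled bitvectors $J_b^t$ with per-chunk succinct SB-trees and \citeauthor{Mut02}'s structure on the sampled subsequence to recover leftmost occurrences, and the same space accounting in which the $\Oh{n\lg\lg n}$ bits of the SB-trees are absorbed thanks to $\lg\sigma=\Omega(\lg\lg n)$. No gaps to report.
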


\subsection{Compressed space} \label{sec:largesigma-compressed}

To reduce the space, we use the same strategy used to prove
Theorem~\ref{thm:compressed-smallsigma}: we represent $S$ using the optimally
compressed structure of \citet{BCGNN13}. This time, however, closer to the
original article, we use different representations for the strings $S_c$ with
alphabets of size $\sigma_c \le w$ and of size $\sigma_c > w$. For the
former, we use the representation of Theorem~\ref{thm:succinct-smallsigma}, 
which uses $n_c\lg\sigma_c+o(n_c)$ bits and answers $\tau_c$-majority queries
in time $\Oh{1/\tau_c}$. For the larger alphabets, we use a slight variant
of Theorem~\ref{thm:linear-largesigma}: we use the same structures $G_b^t$, 
$J_b^t$, $C_b^t$, and pointers to succinct SB-trees, except that the lower 
bound for $b$ will be $\lfloor \lg (2^t \cdot g(n,\sigma)) \rfloor$, for any 
function $g(n,\sigma) = \omega(1)$. The total space for the 
bitvectors $G_b^t$ of string $S_c$ is thus
$\Oh{\frac{n_c\lg\sigma_c \lg g(n,\sigma)}{g(n,\sigma)}}=o(n_c\lg\sigma_c)$,
whereas the other structures already used $o(n_c)$ bits (with a couple of
exceptions we consider soon).

Then, representing $S_c$ with the structure of \citet[Thm.\ 6]{BN15}, so that
it supports
select in time $\Oh{g(n,\sigma)}$ and access in time $\Oh{1}$, the total space 
for $S_c$ is $n_c\lg \sigma_c + o(n_c\lg\sigma_c)$, and the whole structure
uses $nH + o(n)(H+1)$ bits. 

The cases where $b \ge \lfloor \lg (2^t \cdot g(n,\sigma)) \rfloor$ are solved
with $\Oh{1/\tau_c}$ applications of select on $S$, and therefore take time 
$\Oh{(1/\tau_c)\,g(n,\sigma)}$. Instead, the shorter ranges, of length
$\Oh{(1/\tau_c)\,g(n,\sigma)}$, must be processed sequentially, as
in Section~\ref{sec:smallsigma-query}. The space of the sequential algorithm 
can be maintained in $\Oh{1/\tau_c}=\Oh{1/\tau}$ words as follows. We cut the 
interval $S_c[i_c..j_c]$ into chunks of $m=\lceil 1/\tau_c\rceil$ consecutive 
elements, and process each chunk in turn as in 
Section~\ref{sec:smallsigma-query}. The difference is that we maintain an array
with the $\tau_c$-majorities $a$ we have reported and the last position $p_a$ we
have deleted in the lists. From the second chunk onwards, we remove all the 
positions of the known $\tau_c$-majorities $a$ before processing it, 
$\select_a(S_c,\rank_a(S_c,p_a)+r)$, for $r=1,2,\ldots$; note that 
$\rank_a(S_c,p_a)$ is a partial rank query. Since select on $S_c$ costs 
$\Oh{g(n,\sigma)}$ and we perform $\Oh{(1/\tau_c)\,g(n,\sigma)}$ operations, the
total time is $\Oh{(1/\tau_c)\,g(n,\sigma)^2}$. Then we can retain the optimally
compressed space and have any time of the form $\Oh{(1/\tau)\,f(n,\sigma)}$ by 
choosing $g(n,\sigma)=\sqrt{f(n,\sigma)}$.

There are, as anticipated, two final obstacles related to the space. The first
are the $\Oh{n_c}$ bits of \citeauthor{Mut02}'s structure associated with $S_c$
to handle the case $\tau_c < 1/\sigma_c$. To reduce this space to $o(n_c)$, we 
sparsify the structure as in Section~\ref{sec-min:compressed}. The case of 
small $\tau_c$ is then handled in time $\Oh{\sigma_c\,g(n,\sigma)^2}=
\Oh{(1/\tau_c)\,f(n,\sigma)}$ and the space for the sparsified structure is 
$\Oh{n_c/g(n,\sigma)}=o(n_c)$.

The second obstacle is the $\Oh{n_c \lg\lg n_c}$ bits used by the
succinct SB-trees. Examination of the proof of Lemma 3.3 in \citet{GRR09} 
reveals that
one can obtain $\Oh{p\lg\lg u}$ bits of space and $\Oh{\lg p / \lg\lg n}$ time
if we have $p$ elements in a universe $[1..u]$ and can store a precomputed
table of size $o(n)$ that is shared among all the succinct SB-trees. We reduce
the universe size as follows. We logically cut the string $S_c$ into 
$n_c/\sigma_c^2$ pieces of length $\sigma_c^2$. For each symbol $a$ we store a 
bitvector $B_a[1,n_c/\sigma_c^2]$ where $B_a[i]=1$ if and only if $a$ appears 
in the $i$th piece. These bitvectors require $\Oh{n_c/\sigma_c}$ bits in total, 
including support for rank and select.
The succinct SB-trees are now local to the pieces: a succinct SB-tree that spans
several pieces is split into several succinct SB-trees, one covering the
positions in each piece. The 1s corresponding to these pieces in bitvectors 
$B_a$ point to the newly created succinct SB-trees. To find the successor 
of position $i$ given that it is in the same chunk of $i_r > i$, with 
$J_b^t[i_r]=1$, we first compute the piece $p=\lceil i/\sigma_c^2\rceil$ of
$i$ and the piece $p_r=\lceil i_r/\sigma_c^2\rceil$ of $i_r$, and see if $i$
and $i_r$ are in the same piece, that is, if $p=p_r$. If 
so, the answer is to be found in the succinct SB-tree associated with the 1 at 
$J_b^t[i_r]$. Otherwise, that original structure has been split into several, 
and the part that covers the piece of $i$ is associated with the 1 at $B_a[p]$.
It is possible, however, that there are no elements in the piece $p$, that is,
$B_a[p]=0$, or that there are elements but no one is after $i$, that is, the
succinct SB-tree associated with piece $p$ finds no successor of $i$. In this 
case, we find the next piece that follows $p$ where $a$ has occurrences, 
$p'=\select_1(B_a,\rank_1(B_a,p)+1)$, and if $p' < p_r$ we query the succinct 
SB-tree associated with $B_a[p']=1$ for its first element (or the successor of 
$i$). If, instead, $p' \ge p_r$, we query instead the succinct SB-tree
associated with $J_b^t[i_r]=1$, as its positions are to the left of those 
associated with $B_a[p']=1$. Therefore,
successor queries still take $\Oh{1}$ time. The total number of elements stored
in succinct SB-trees is still at most $n_c$, because no duplicate elements are 
stored, but now each requires only $\Oh{\lg\lg \sigma_c}$ bits, for a total
space of $\Oh{n_c\lg\lg \sigma_c}=o(n_c\lg\sigma_c)$ bits. There may be up to 
$\sigma\cdot(n_c/\sigma^2)$ pointers to succinct SB-trees from bitvectors $B_a$,
each requiring $\Oh{\lg n_c}$ bits, for a total of $\Oh{(n_c\lg n_c)/\sigma_c} =
\Oh{n_c} = o(n_c\lg\sigma_c)$, since $\sigma_c > w$.

\begin{theorem} \label{thm:compressed-largesigma}
Let $S[1..n]$ be a string whose distribution of symbols has entropy $H$, over
alphabet $[1..\sigma]$. For any $f(n,\sigma)=\omega(1)$, we can store $S$ in 
$nH + o(n)(H+1)$ bits  such that 
later, given the endpoints of a range and $\tau$, we can return the 
$\tau$-majorities for that range in time $\Oh{(1/\tau)\,f(n,\sigma)}$.
\end{theorem}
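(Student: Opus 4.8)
The plan is to combine the large-alphabet linear-space structure of Theorem~\ref{thm:linear-largesigma} with the frequency-based alphabet partition of \citet{BCGNN13}, exactly as was done to prove Theorem~\ref{thm:compressed-smallsigma}, but refined to cope with genuinely large subalphabets. I would first recall the partition: the symbols are grouped into $\Oh{\lg^2 n}$ frequency classes, a class-indicator string $K[1..n]$ is stored with the structure of Corollary~\ref{cor:compressed-smallsigma} (its alphabet is polylogarithmic, so it supports access, rank, select, and $\tau$-majority queries all in the required time and space), and each class $c$ owns a subsequence $S_c[1..n_c]$ over alphabet $[1..\sigma_c]$. The queries then reduce, as before, by first running a $\tau$-majority query on $K$ to find the at most $1/\tau$ candidate classes, mapping the interval into each surviving $S_c[i_c..j_c]$ with rank on $K$, and recursing with the scaled threshold $\tau_c = \tau(j-i+1)/(j_c-i_c+1)$; the total query time telescopes to $\sum_c \Oh{1/\tau_c} = \Oh{1/\tau}$ just as in Theorem~\ref{thm:compressed-smallsigma}.

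The new ingredient is that I split the classes by whether $\sigma_c \le w$ or $\sigma_c > w$. For small subalphabets I invoke Theorem~\ref{thm:succinct-smallsigma} verbatim, costing $n_c\lg\sigma_c + o(n_c)$ bits with $\Oh{1/\tau_c}$ query time. For large subalphabets I deploy the variant of Theorem~\ref{thm:linear-largesigma} described in this subsection: the same bitvectors $G_b^t$, sampling bitvectors $J_b^t$, arrays $C_b^t$, and succinct SB-trees, but with the lower bound for $b$ raised to $\lfloor \lg(2^t g(n,\sigma)) \rfloor$. This raised floor is the crux of the space reduction: it shrinks the $G_b^t$ bitvectors from $\Oh{n_c\lg\sigma_c}$ to $\Oh{n_c\lg\sigma_c \lg g / g} = o(n_c\lg\sigma_c)$ bits, at the cost of having to process ranges of length $\Oh{(1/\tau_c)\,g(n,\sigma)}$ sequentially rather than through the sampled structure. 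I then represent $S_c$ itself with the structure of \citet[Thm.\ 6]{BN15}, giving constant-time access and $\Oh{g(n,\sigma)}$-time select within $n_c\lg\sigma_c + o(n_c\lg\sigma_c)$ bits, so that each application of Lemma~\ref{lem:check} costs $\Oh{g(n,\sigma)}$. Summed over all classes, \citeauthor{BCGNN13}'s accounting shows the total is $nH + o(n)(H+1)$ bits, and the sequential passes cost $\Oh{(1/\tau_c)\,g(n,\sigma)^2}$ per class; choosing $g(n,\sigma)=\sqrt{f(n,\sigma)}$ yields the claimed $\Oh{(1/\tau)\,f(n,\sigma)}$ time.

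The hard part, and the bulk of the argument, is disposing of two residual space obstacles that are $o(n_c)$ only after extra work. The first is the $\Oh{n_c}$ bits of \citeauthor{Mut02}'s structure attached to each $S_c$ for the regime $\tau_c < 1/\sigma_c$; here I would sparsify exactly as in Section~\ref{sec-min:compressed}, reducing it to $\Oh{n_c/g(n,\sigma)} = o(n_c)$ bits while keeping the small-$\tau_c$ case within $\Oh{\sigma_c\,g(n,\sigma)^2} = \Oh{(1/\tau_c)\,f(n,\sigma)}$ time. The second, more delicate, obstacle is the $\Oh{n_c\lg\lg n_c}$ bits consumed by the succinct SB-trees, which is too large when $\sigma_c$ is only moderately large. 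The key idea is to shrink the universe over which each SB-tree operates: I would logically cut $S_c$ into $n_c/\sigma_c^2$ pieces of length $\sigma_c^2$, keep a sparse bitvector $B_a$ per symbol recording which pieces contain $a$ (costing $\Oh{n_c/\sigma_c}$ bits total), and localize every SB-tree to a single piece, so each now lives in universe $[1..\sigma_c^2]$ and needs only $\Oh{\lg\lg\sigma_c}$ bits per element. The main thing to get right is that successor queries still run in $\Oh{1}$ time across piece boundaries: when the piece of $i$ holds no successor, I must consult $B_a$ to jump to the next nonempty piece $p' = \select_1(B_a,\rank_1(B_a,p)+1)$ and carefully arbitrate, via the comparison $p' < p_r$ versus $p' \ge p_r$, between the SB-tree localized to piece $p'$ and the one anchored at $J_b^t[i_r]=1$. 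Verifying that this case analysis always locates the true successor of $i$ inside the chunk — and that the pointer overhead $\Oh{(n_c\lg n_c)/\sigma_c} = o(n_c\lg\sigma_c)$ stays within budget because $\sigma_c > w = \Omega(\lg n)$ — is where I expect the real care to be required.
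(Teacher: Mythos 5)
Your proposal follows essentially the same route as the paper's own proof: alphabet partitioning with $K$ handled by Corollary~\ref{cor:compressed-smallsigma}, Theorem~\ref{thm:succinct-smallsigma} for classes with $\sigma_c\le w$, the raised lower bound $\lfloor\lg(2^t g(n,\sigma))\rfloor$ on $b$ with $S_c$ stored via \citet[Thm.\ 6]{BN15} and $g=\sqrt{f}$, plus the same two space fixes (sparsified \citeauthor{Mut02} structure and piece-localized succinct SB-trees with the $B_a$ bitvectors). The only detail you gloss over is how the sequential pass over the short ranges is kept within $\Oh{1/\tau}$ working space (the paper chunks the range into $\lceil 1/\tau_c\rceil$-element pieces and remembers the last deleted position of each reported majority), but this is a minor implementation point rather than a gap.
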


Note that accessing a position in $S$ still requires constant time with this
representation. Further, we can obtain a version using nearly compressed space,
$(1+\epsilon)nH+o(n)$ bits for any constant $\epsilon>0$, with optimal query 
time, by setting $g(n,\sigma)$ to a constant value. First, use for $S_c$ the 
structure of \citet{BCGNN13} that needs $(1+\epsilon/3)nH+o(n)$ bits and solves
access and select in constant time. Second, let $\kappa$ be the constant
associated with the $\Oh{\frac{n_c\lg\sigma_c\lg g(n,\sigma)}{g(n,\sigma)}}$
bits used by bitvectors $G_b^t$ and the sparsified \citeauthor{Mut02}'s
structures. Then, choosing 
$g(n,\sigma)=\frac{6\kappa}{\epsilon} \lg\frac{6\kappa}{\epsilon}$ ensures 
that the space becomes $(\epsilon/3)n_c\lg\sigma_c$ bits, which add up to 
$(\epsilon/3)nH$. All the other terms of the form $o(nH)$ are 
smaller than another $(\epsilon/3)nH+o(n)$. 
Therefore the total space adds up to $(1+\epsilon)nH+o(n)$ bits.
The time to sequentially solve a range of length $\Oh{(1/\tau_c)\,g(n,\sigma)}$
is $\Oh{(1/\tau_c)\,g(n,\sigma)^2}=\Oh{1/\tau_c}$. 

\begin{theorem} \label{thm:epsilon-largesigma}
Let $S[1..n]$ be a string whose distribution of symbols has entropy $H$, over
alphabet $[1..\sigma]$. For any constant $\epsilon>0$, we can store $S$ in 
$(1+\epsilon)nH + o(n)$ bits  such that 
later, given the endpoints of a range and $\tau$, we can return the 
$\tau$-majorities for that range in time $\Oh{1/\tau}$.
\end{theorem}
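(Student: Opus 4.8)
The statement promises $(1+\epsilon)nH+o(n)$ bits with worst-case optimal $\Oh{1/\tau}$ query time, for any constant $\epsilon>0$ and arbitrary alphabet.

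The plan is to recycle the entire construction behind Theorem~\ref{thm:compressed-largesigma}, treating its tuning parameter $g(n,\sigma)$ as a knob and setting it to a suitable large \emph{constant} rather than to a slowly growing $\omega(1)$ function. Recall that in that construction $S$ is decomposed with the frequency-class scheme of \citet{BCGNN13}: a class string $K$ (handled with Corollary~\ref{cor:compressed-smallsigma}) and per-class subsequences $S_c$, each represented either via Theorem~\ref{thm:succinct-smallsigma} when $\sigma_c\le w$ or via the variant of Theorem~\ref{thm:linear-largesigma} when $\sigma_c>w$, with the bitvectors $G_b^t$ built from the narrowed lower bound $b\ge\lfloor\lg(2^t\,g(n,\sigma))\rfloor$. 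The only term in the whole space accounting that is not already $o(nH)+\Oh{n}$ is the $\Oh{n_c\lg\sigma_c\lg g(n,\sigma)/g(n,\sigma)}$ bits spent by those $G_b^t$ together with the sparsified \citeauthor{Mut02} structures; for $g=\omega(1)$ this is $o(n_c\lg\sigma_c)$ and yields the $nH+o(n)(H+1)$ bound, whereas for a constant $g$ it becomes a controllable constant fraction of $n_c\lg\sigma_c$.

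First I would pin down that constant fraction. Letting $\kappa$ be the hidden constant in the bound above, the choice $g=\frac{6\kappa}{\epsilon}\lg\frac{6\kappa}{\epsilon}$ forces $\kappa\lg g/g\le\epsilon/3$, so the offending term is at most $(\epsilon/3)\,n_c\lg\sigma_c$ per class. The crucial observation --- and the step I expect to be the real obstacle --- is that this redundancy is expressed in $n_c\lg\sigma_c$, yet the theorem demands a constant-factor overhead measured against $nH$, not against $n\lg\sigma$. This is exactly where the frequency-class decomposition pays off: its defining property is $\sum_c n_c\lg\sigma_c\le nH+o(n)$, so summing the per-class bounds gives a total $G_b^t$ redundancy of at most $(\epsilon/3)nH+o(n)$.

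Next I would supply the underlying representation of each $S_c$. For $\sigma_c>w$ I cannot use the $nH$-bit structure of \citet[Thm.\ 6]{BN15}, since its select time is $\omega(1)$ while Lemma~\ref{lem:check} needs select; instead I would use the structure of \citet{BCGNN13} occupying $(1+\epsilon/3)nH+o(n)$ bits and supporting access and select in $\Oh{1}$ time. With $g$ now a constant, select costs $\Oh{1}$, so every candidate is verified through Lemma~\ref{lem:check} in constant time. All remaining auxiliary structures --- the reduced succinct SB-trees, the $B_a$ bitvectors and their pointers, and the sparsified \citeauthor{Mut02} structures for the $\tau_c<1/\sigma_c$ case --- contribute only $o(n_c\lg\sigma_c)$ or $\Oh{n_c}$ bits, which I would absorb into a third slice of $(\epsilon/3)nH+o(n)$. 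Adding the three slices to the $(1+\epsilon/3)nH$ baseline yields $(1+\epsilon)nH+o(n)$ bits.

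Finally I would check the query time, which is now immediate because the constant $g$ collapses every formerly superconstant factor. Long ranges produce $\Oh{1/\tau_c}$ candidates through \citeauthor{Mut02}'s algorithm on $J_b^t$, each verified in $\Oh{1}$; short ranges, of length $\Oh{(1/\tau_c)\,g}$, are processed sequentially as in Section~\ref{sec:smallsigma-query} in time $\Oh{(1/\tau_c)\,g^2}=\Oh{1/\tau_c}$; and the $\tau_c<1/\sigma_c$ case uses the sparsified \citeauthor{Mut02} structure in time $\Oh{\sigma_c\,g^2}=\Oh{1/\tau_c}$. Summing over the classes exactly as in Theorem~\ref{thm:compressed-largesigma}, $\sum_c\Oh{1/\tau_c}=\Oh{1/\tau}$, giving worst-case optimal query time and completing the tradeoff.
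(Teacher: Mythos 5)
Your proposal is correct and follows essentially the same route as the paper: set $g(n,\sigma)=\frac{6\kappa}{\epsilon}\lg\frac{6\kappa}{\epsilon}$ so the $G_b^t$ and sparsified structures cost $(\epsilon/3)n_c\lg\sigma_c$ bits per class (summing to $(\epsilon/3)nH$ via the alphabet-partitioning property), switch the base representation of each $S_c$ to the $(1+\epsilon/3)nH+o(n)$-bit constant-time access/select structure of \citet{BCGNN13}, and observe that constant $g$ makes the sequential phase run in $\Oh{(1/\tau_c)\,g^2}=\Oh{1/\tau_c}$ time. This matches the paper's argument in both the space accounting and the time analysis.
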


%
%

\subsection{Finding range modes}

While finding range modes is a much harder problem in general,
we note that we can use our data structure from
Theorem~\ref{thm:compressed-largesigma} to find a range mode quickly when it
is actually reasonably frequent.  Suppose we want to find the mode of \(S
[i..j]\), where it occurs $\occ$ times (we do not know $\occ$).  We perform multiple range $\tau$-majority queries on \(S [i..j]\), starting with \(\tau = 1\) and repeatedly reducing it by a factor of 2 until we find at least one $\tau$-majority.  This takes time
\[\Oh{\left(1 + 2 + 4 + \ldots + 2^{\left\lceil \log \frac{j - i + 1}{\occ} \right\rceil} \right) f(n,\sigma)}
= \Oh{\frac{(j - i + 1) f(n,\sigma)}{\occ}}\]
and returns a list of $\Oh{\frac{j - i + 1}{\occ}}$ elements that
includes all those that occur at least $\occ$ times in $S[i..j]$.  We use rank
queries to determine which of these elements is the mode. For the fastest
possible time on those rank queries, we use for $S$ the representation of
\citet[Thm.\ 8]{BN15}, and also set $f(n,\sigma)=\lg\lg_w \sigma$, the same time of rank. The cost is then $\Oh{\frac{(j - i + 1) \log \log_w \sigma}{\occ}}$. The theorem holds for $\lg\sigma=\Oh{\lg w}$ too, as in this case we can use
Theorem~\ref{thm:compressed-smallsigma} with constant-time rank queries.

\begin{theorem} \label{thm:frequent modes}
Let $S[1..n]$ be a string whose distribution of symbols has entropy $H$. 
We can store $S$ in \(n H + o (n) (H + 1)\) bits
such that later, given endpoints $i$ and $j$, we can return the mode of \(S [i..j]\) in $\Oh{\frac{(j - i + 1) \log \log_w \sigma}{\occ}}$ time, where $\occ$ is the number of times the
mode occurs in $S[i..j]$.
\end{theorem}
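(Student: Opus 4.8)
The plan is to combine Theorem~\ref{thm:compressed-largesigma} with an exponential-search strategy on the threshold $\tau$. The key observation is that the mode of $S[i..j]$ occurs exactly $\occ$ times, so its relative frequency is $\occ/(j-i+1)$; hence the mode is a $\tau$-majority precisely when $\tau < \occ/(j-i+1)$. Since we do not know $\occ$ in advance, I would start with $\tau=1$ and repeatedly halve it, issuing a $\tau$-majority query at each step, until the first time the query returns a nonempty list. The first $\tau$ that succeeds satisfies $\tau \ge \occ/(2(j-i+1))$ (otherwise the previous, larger, threshold would already have succeeded), so the number of iterations is $\Oh{\log\frac{j-i+1}{\occ}}$ and the final threshold is $\Theta(\occ/(j-i+1))$.

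Next I would bound the total query time. The $k$th query uses $\tau = 2^{-k}$, so by Theorem~\ref{thm:compressed-largesigma} it costs $\Oh{2^k f(n,\sigma)}$. Summing the geometric series up to the stopping index $\lceil\log\frac{j-i+1}{\occ}\rceil$ gives a total of
\[\Oh{\left(1 + 2 + 4 + \cdots + 2^{\lceil \log \frac{j-i+1}{\occ}\rceil}\right) f(n,\sigma)} = \Oh{\frac{(j-i+1)\,f(n,\sigma)}{\occ}},\]
dominated by the last term. At this final threshold the returned list contains $\Oh{1/\tau}=\Oh{(j-i+1)/\occ}$ candidate elements, and crucially it includes \emph{every} element whose frequency in $S[i..j]$ is at least $\occ$ — in particular the mode itself. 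I would then scan this candidate list and, for each candidate $a$, compute its exact frequency in $S[i..j]$ by two rank queries, $\rank_a(S,j)-\rank_a(S,i-1)$, keeping the element with the maximum count; that element is the mode.

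The remaining task is to balance the two cost components so the rank step does not dominate. For each of the $\Oh{(j-i+1)/\occ}$ candidates a rank query costs the rank time of the chosen representation. Choosing the representation of \citet[Thm.\ 8]{BN15} for $S$ gives rank in $\Oh{\lg\lg_w\sigma}$ time within the $nH+o(n)(H+1)$-bit budget of Theorem~\ref{thm:compressed-largesigma}, and setting $f(n,\sigma)=\lg\lg_w\sigma$ matches the majority-search cost to the rank cost, yielding the claimed bound $\Oh{\frac{(j-i+1)\log\log_w\sigma}{\occ}}$. For the small-alphabet case $\lg\sigma=\Oh{\lg w}$ I would instead invoke Theorem~\ref{thm:compressed-smallsigma}, which supports constant-time rank and $\Oh{1/\tau}$ majority queries in $nH+o(n)$ bits, so the same argument gives the bound with $\lg\lg_w\sigma=\Oh{1}$.

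The main obstacle is not any single step but verifying the interface assumptions that let the two theorems be applied simultaneously: I must confirm that the representation supporting the fast rank queries is the \emph{same} structure on which the $\tau$-majority machinery of Theorem~\ref{thm:compressed-largesigma} is built, so that no extra space beyond $nH+o(n)(H+1)$ bits is incurred and the constant-time access to $S$ is preserved. Here the fact that Theorem~\ref{thm:compressed-largesigma} already represents $S$ via the \citet{BCGNN13} decomposition with \citet{BN15} components makes this compatible — the only real choice is tuning $f(n,\sigma)$ and the rank variant together. A secondary subtlety worth checking is the boundary behaviour of the exponential search when $\occ$ is close to $j-i+1$ (so only one or two iterations occur) and when $\occ=1$ (so the search runs its full length and the candidate list may contain all distinct elements), but in both extremes the geometric-sum bound remains valid.
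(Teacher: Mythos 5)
Your proposal is correct and matches the paper's proof essentially step for step: the same exponential search on $\tau$ starting from $1$ and halving until a nonempty answer, the same geometric-sum bound $\Oh{\frac{(j-i+1)f(n,\sigma)}{\occ}}$, the same verification of the $\Oh{\frac{j-i+1}{\occ}}$ candidates by rank queries, and the same tuning $f(n,\sigma)=\lg\lg_w\sigma$ with the representation of \citet[Thm.\ 8]{BN15} (and Theorem~\ref{thm:compressed-smallsigma} for $\lg\sigma=\Oh{\lg w}$). No gaps.
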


\section{Range Minorities Revisited} \label{sec:minagain}

The results obtained for range majorities can be adapted to find range 
minorities, which in particular improves the result of
Theorem~\ref{thm-min:epsilon}. The main idea is again that, if we
test any $\lceil 1/\tau\rceil$ distinct elements, we must find a 
$\tau$-minority because not all of those can occur more than
$\tau(j-i+1)$ times in $S[i..j]$. Therefore, we can use mechanisms similar
to those we designed to find $\Oh{1/\tau}$ distinct candidates to 
$\tau$-majorities.

Let us first consider the bitvectors $G_b^t$ defined in 
Section~\ref{sec:smallsigma}. We now define bitvectors $I_b^t$, where 
we flag the positions of the first $2^t$ and the last $2^t$ distinct values 
in each block (we may flag fewer positions if the block contains less than
$2^t$ distinct values). Since we set $\Oh{2^t}$ bits per block, the bitvectors 
$I_b^t$ use asymptotically the same space of the bitvectors $G_b^t$.

Given a $\tau$-minority query, we compute $b$ and $t$ as in 
Section~\ref{sec:smallsigma} and use rank and select to find all
the 1s in the range $I_b^t[i..j]$. Those positions contain a $\tau$-minority 
in $S[i..j]$ if there is one, as shown next.

\begin{lemma}
The positions flagged in $I_b^t[i..j]$ contain a $\tau$-minority in $S[i..j]$,
if there is one.
\end{lemma}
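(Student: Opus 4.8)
The plan is to reduce the statement to a counting argument and then to show that the flagged positions capture enough distinct elements of $S[i..j]$. First I would recall that, because every $\tau$-majority of $S[i..j]$ occupies more than a $\tau$ fraction of the range, there are strictly fewer than $1/\tau$ of them, hence at most $\lceil 1/\tau\rceil-1$. Since $t=\lceil\lg(1/\tau)\rceil$ gives $2^t\ge\lceil 1/\tau\rceil$, it therefore suffices to prove that the positions flagged in $I_b^t[i..j]$ cover \emph{either} at least $\lceil 1/\tau\rceil$ distinct elements \emph{or} every distinct element occurring in $S[i..j]$: in the first case at least one flagged element must fail to be a $\tau$-majority, and in the second case any existing $\tau$-minority is necessarily among the flagged elements.

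The core of the argument is a per-block claim. With $b=\lfloor\lg(j-i+1)\rfloor$ the range has length at least $2^b$, i.e., more than the block length $2^{b-1}$, so it overlaps several blocks, of which only the two extreme ones may be partially contained. For a block $\mathcal{B}$ whose intersection with $[i..j]$ holds $d_\mathcal{B}$ distinct elements, I would show that the flagged positions lying inside $[i..j]$ witness at least $\min(2^t,d_\mathcal{B})$ of them. For a fully contained block this is immediate: if it holds fewer than $2^t$ distinct values they are all among the first $2^t$ and hence flagged, and otherwise the first $2^t$ distinct values are flagged. The delicate case is a partially overlapped block, where values appearing only in the out-of-range part could in principle crowd out the values we care about; this is precisely why both the first \emph{and} the last $2^t$ distinct values are flagged. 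For the left-partial block, every element of the range-part has its rightmost in-block occurrence at a position $\ge i$, so these elements realise the largest in-block occurrences and thus fall among the last $2^t$ distinct values, whose flagged (rightmost) occurrences then lie inside $[i..j]$; the right-partial block is symmetric, using the first $2^t$ distinct values and their leftmost occurrences.

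With the per-block claim in hand, I would finish by a dichotomy. If some overlapped block has at least $2^t$ distinct elements in its range-part, the claim already flags $\ge 2^t\ge\lceil 1/\tau\rceil$ distinct elements inside $[i..j]$. Otherwise every overlapped block has fewer than $2^t$ distinct elements in its range-part, so the claim flags \emph{all} of them; taking the union over the blocks, every distinct element of $S[i..j]$ is flagged. Either alternative meets the sufficient condition above, which completes the proof. (The regime here is $b\ge t$, for which $I_b^t$ is defined; a range so short that $b<t$ has length $\Oh{1/\tau}$ and is handled directly, as in Section~\ref{sec:smallsigma-query}.) I expect the partial-block analysis to be the main obstacle, since it is exactly where the decision to flag both the first and the last $2^t$ distinct values of each block becomes indispensable.
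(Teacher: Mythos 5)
Your proof is correct and follows essentially the same route as the paper's: the same dichotomy (either some block contributes $2^t \ge \lceil 1/\tau\rceil$ flagged distinct elements in the range, forcing a non-majority among them, or every block flags all of its distinct elements falling in $[i..j]$), with the first/last-$2^t$ flagging handling the two partially overlapped blocks. You simply spell out in more detail the ordering argument for partial blocks that the paper's proof states in one line.
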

\begin{proof}
If $I_b^t[i..j]$ overlaps a block where it does not flag 
$2^t=\lceil 1/\tau\rceil$ distinct elements (in which case one is for sure
a $\tau$-minority), then it marks all the distinct block elements that fall 
inside $[i..j]$. This is obvious if $[i..j]$ fully contains the block, and
it also holds if $[i..j]$ intersects a prefix or a suffix of the block, since
the block marks its $2^t$ first and last occurrences of distinct elements.
\end{proof}

Just as for $\tau$-majorities, we use $I_b^t$ only if $1/\tau \le \sigma$, 
since otherwise we can test all the alphabet elements one by one. The test
proceeds using rank on $S$ if $\sigma$ is small, or using Lemma~\ref{lem:check}
if $\sigma$ is large. We now describe precisely how we proceed.

\subsection{Small alphabets} \label{sec:min-smallsigma}

If $\lg\sigma = \Oh{\lg w}$, we use a multiary wavelet tree as in 
Section~\ref{sec:smallsigma}. This time, we do not run $\tau$-majority queries
on each wavelet tree node $v$ to determine which of its children to explore, but
rather we explore every child having some symbol in the range $S_v[i_v..j_v]$. 
To efficiently find the distinct symbols that appear the range, we store a 
sparsified \citeauthor{Mut02}'s structure similar to the one described in 
Section~\ref{sec-min:compressed}; this time we will have no slowdown
thanks to the small alphabet of $S_v$. 

Let $C_v$ be the array corresponding to 
string $S_v$. We cut $S_v$ into blocks of $w^\beta$ bits, and record in an array
$C'_v[1..n_v/w^\beta]$ the minimum value in the corresponding block of $C_v$.
Then, the leftmost occurrence $S[k]=p$ of each distinct symbol $p$ in 
$S_v[i_v..j_v]$ has a value $C_v[k]<i_v$, and thus its corresponding block 
$C_v'[k']$ also holds $C_v'[k']<i_v$. We initialize a word $E \leftarrow 0$ 
containing flags for the $\sigma'=w^\beta$ symbols, separated as in the final 
state of the word $C$ of Section~\ref{sec:smallsigma-succinct}. Each time the 
algorithm of \citeauthor{Mut02} on $C_v'$ gives us a new block, we apply the
algorithm of Section~\ref{sec:smallsigma-succinct} to count in a word $C$ the 
occurrences of the distinct symbols in that block, we isolate the counters
reaching the threshold $y=1$, and compare $E$ with $E~\textsc{or}~C$. If they
are equal, then we stop the recursive algorithm, since all the symbols in the
range had already appeared before (see the final comments on the proof of
Lemma~\ref{lem:sparsemuthu}). Otherwise, we process the subrange to the left
of the block, update $E \leftarrow E~\textsc{or}~C$, and process the subrange
to the right. When we finish, $E$ contains all the symbols that appear in
$S_v[i_v..j_v]$. In the recursive process, we also stop when we have considered
$\lceil 1/\tau_v \rceil$ blocks, since each includes at least one new element
and it is sufficient to explore $\lceil 1/\tau_v \rceil$ children to find a 
$\tau_v$-minority (because each child contains at least
one candidate). Finally, we extract the bits of $E$ one by one as done in 
Section~\ref{sec:smallsigma-succinct} with the use of $D$. For each extracted
bit, we enter the corresponding child in the wavelet tree. The total time
is thus $\Oh{1/\tau_v}$ and the bitvectors $C_v$ add up to 
$\Oh{n/w^\beta}=o(n)$ bits in total.

The $\tau_u$ values to use in the children $u$ of $v$ are computed
as in Section~\ref{sec:smallsigma-succinct}, so the analysis
leading to $\Oh{1/\tau}$ total time applies. When we arrive at the leaves $u$ of
the wavelet tree, we obtain the distinct elements and compute using rank 
the number of times they occur in $S_u[i_u..j_u]$, so we can immediately 
report the first $\tau$-minority we find.

We still have to describe how we handle the intervals that are smaller than
the lower limit for $b$, $\lfloor 2^t \cdot w^\beta/4 \rfloor$. We do the
counting exactly as in Section~\ref{sec:smallsigma-succinct}. We must then 
obtain the counters that are between 1 and $y-1$. On one hand, we use the 
bound $y'=1$ and repeat their computation to obtain in $C_l \leftarrow C$ 
the counters that are at least 1. On the other, we compute
$C \leftarrow C + (2^{2\ell}-y) \cdot (0^{k\ell+\ell-1} 1 0^{(k-1)\ell})^{\sigma'}$
as before, and isolate the non-overflowed bits with
$C_r \leftarrow (\textsc{not}~C)~\textsc{and}~(0^{(k-1)\ell-1} 1 0^{(k+1)\ell})^{\sigma'}$.
Then we extract the first of the bits marked in $C \leftarrow C_l~\textsc{and}~
C_r$ and report it.

To obtain compressed space, we use the alphabet partitioning technique of
Section~\ref{sec:smallsigma-compressed}. Once again, we must identify at most
$\lceil 1/\tau \rceil$ nonempty ranges $[i_c..j_c]$ from $K[i..j]$. Those are
obtained in the same way as on the multiary wavelet tree, since $K$ is 
represented in that way (albeit the strings $S_v$ are compressed).
We then look for $\tau_c$-minorities in the strings 
$S_c[i_c..j_c]$ one by one, until we find one or we exhaust them. The total
time is $\Oh{1/\tau}$.

\begin{theorem} \label{thm-min:compressed-smallsigma}
Let $S[1..n]$ be a string whose distribution of symbols has entropy $H$, over
alphabet $[1..\sigma]$, with $\lg\sigma=\Oh{\lg w}$. We can store $S$ in
\(n H + o(n)\) bits such that later, given the endpoints of a range and $\tau$,
we can return a $\tau$-minority for that range (if one exists) in time 
$\Oh{1/\tau}$.
\end{theorem}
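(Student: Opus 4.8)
The plan is to adapt the small-alphabet $\tau$-majority machinery of Section~\ref{sec:smallsigma} to the minority problem, exploiting the duality noted at the start of Section~\ref{sec:minagain}: any $\lceil 1/\tau\rceil$ distinct elements of $S[i..j]$ must contain a $\tau$-minority (if one exists), since fewer than $1/\tau$ of them can be $\tau$-majorities. Thus it suffices to enumerate up to $\lceil 1/\tau\rceil$ distinct symbols of the range and test each with a frequency count, reporting the first that is not a $\tau$-majority.

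I would represent $S$ with the multiary wavelet tree underlying Corollary~\ref{cor:compressed-smallsigma}, which already gives $nH + o(n)$ bits with constant-time access, rank, and select. The query then descends the tree: at a node $v$, rather than running a majority query as in the majority case, I enumerate the distinct symbols of $S_v[i_v..j_v]$ and recurse into every child whose projected range is nonempty, rescaling the threshold to $\tau_u$ exactly as in Section~\ref{sec:smallsigma-succinct}. I stop after collecting $\lceil 1/\tau_v\rceil$ children, since each contributes at least one candidate. The telescoping bound $\sum_u 1/\tau_u = \Oh{1/\tau}$ over the children of $v$, times the constant tree height, keeps the total at $\Oh{1/\tau}$; at the leaves I read off the actual symbols, count each with a rank query, and return the first $\tau$-minority.

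The enumeration of distinct symbols inside one node is where the real work lies, and I would assemble it from two ingredients already in the paper. To walk the distinct symbols of $S_v[i_v..j_v]$ I use the sparsified \citeauthor{Mut02} structure of Lemma~\ref{lem:sparsemuthu}, blocking $C_v$ into chunks of $w^\beta$ and storing only the block minima in an array $C_v'$; summed over the tree this costs only $\Oh{n/w^\beta}=o(n)$ bits. Because each node alphabet has size $w^\beta$, every block returned by the recursion can be processed with the word-parallel popcount of Section~\ref{sec:smallsigma-succinct}, maintaining a single word $E$ of per-symbol ``seen'' flags; the termination test becomes $E = E~\textsc{or}~C$, which by the closing remark of Lemma~\ref{lem:sparsemuthu} correctly halts once no block yields a fresh symbol, so there is no slowdown. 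The one genuinely fiddly case is the short intervals below the lower limit $\lfloor 2^t\cdot w^\beta/4\rfloor$ for $b$: there I must extract the counters lying strictly between $1$ and $y-1$, which I do by computing both a ``$\geq 1$'' mask $C_l$ and a ``$<y$'' (non-overflow) mask $C_r$ and reporting the first bit set in $C_l~\textsc{and}~C_r$.

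For the compressed bound I finally apply the alphabet partitioning of Section~\ref{sec:smallsigma-compressed}, representing the class sequence $K$ via Corollary~\ref{cor:compressed-smallsigma} and each $S_c$ by the same small-alphabet minority structure. The distinct nonempty classes of $K[i..j]$ are enumerated by the very same sparsified-listing-plus-popcount routine, and I probe the $S_c$ in turn until a $\tau_c$-minority surfaces; by the accounting of \citet{BCGNN13} the spaces add to $nH+o(n)$ and the per-class times again telescope to $\Oh{1/\tau}$. I expect the main obstacle to be verifying that the word-parallel block counting interfaces correctly and in constant time with the sparsified colored-listing recursion --- specifically, that the $E = E~\textsc{or}~C$ termination and the ``between $1$ and $y-1$'' extraction are both sound --- rather than the space bookkeeping, which follows the majority case almost verbatim.
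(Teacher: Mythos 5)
Your proposal is correct and follows essentially the same route as the paper's own proof: descending the multiary wavelet tree while enumerating distinct symbols via the sparsified colored-listing structure with word-parallel counters and the $E = E~\textsc{or}~C$ termination test, handling short ranges with the $C_l~\textsc{and}~C_r$ mask, and finishing with alphabet partitioning on $K$ and the subsequences $S_c$. The only nitpick is your phrase ``strictly between $1$ and $y-1$'' --- the intended set is counters that are at least $1$ and below $y$ --- but the masks you describe compute exactly the right thing.
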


Note that we can use a single representation using $nH+o(n)$ bits solving both
the $\tau$-majority queries of Theorem~\ref{thm:compressed-smallsigma} and
the $\tau$-minority queries of Theorem~\ref{thm-min:compressed-smallsigma}.

\subsection{Large alphabets}

For large alphabets we must use Lemma~\ref{lem:check} to check for 
$\tau$-minorities, and thus we must find the leftmost positions in
$S[i..j]$ of the $\tau$-minority candidates. We use the same bitvectors
$J_b^t$ of Section~\ref{sec:largesigma}, so that they store sampled positions 
corresponding to the 1s in $I_b^t$, and proceed exactly as in that section,
both if $\tau < 1/\sigma$ or if $\tau \ge 1/\sigma$.

To obtain compression, we also use alphabet partitioning. We use on the
multiary wavelet tree of $K$ the method described in
Section~\ref{sec:min-smallsigma}, and then complete the queries with
$\tau_c$-minority queries on the strings $S_c$ over small or large alphabets,
as required, until we find one result or exhaust all the strings.
The only novelty is that we must now find $\tau_c$-minorities sequentially for 
the ranges that are shorter than $\lfloor \lg(2^t \cdot g(n,\sigma)) \rfloor
= \Oh{(1/\tau_c)\,g(n,\sigma)}$. For this, we adapt the 
$\Oh{(1/\tau)\,g(n,\sigma)^2}$-time sequential algorithm 
described in Section~\ref{sec:largesigma-compressed}. The only difference is 
that we stop as soon as we test a candidate $a$ that turns out not to be a 
$\tau_c$-majority, then reporting the $\tau$-minority $a$.

Depending on whether we use 
Theorem~\ref{thm:compressed-largesigma} or \ref{thm:epsilon-largesigma} to
represent $S$ and how we choose $f(n,\sigma)$, we obtain 
Theorem~\ref{thm-min:compressed} again or an improved version of
Theorem~\ref{thm-min:epsilon}:


\begin{theorem} \label{thm-min:epsilon-largesigma}
Let $S[1..n]$ be a string whose distribution of symbols has entropy $H$, over
alphabet $[1..\sigma]$. For any constant $\epsilon>0$, we can store $S$ in 
$(1+\epsilon)nH + o(n)$ bits  such that later, given the endpoints of a range 
and $\tau$, we can return a $\tau$-minority for that range (if one exists) in 
time $\Oh{1/\tau}$.
\end{theorem}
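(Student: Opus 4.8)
The plan is to mirror, for $\tau$-minorities, the compressed-space construction developed for $\tau$-majorities in Theorem~\ref{thm:epsilon-largesigma}, exploiting the duality already established earlier in this section. The starting point is the alphabet-partitioning representation of \citet{BCGNN13} applied to $S$, which splits the alphabet into frequency classes, encodes the class sequence $K$ via the small-alphabet structure, and represents each class subsequence $S_c$ separately. For the $\tau$-minority problem, the key observation (already spelled out at the head of Section~\ref{sec:minagain}) is that testing any $\lceil 1/\tau\rceil$ distinct elements suffices to find a $\tau$-minority if one exists, since not all of them can be $\tau$-majorities. So the high-level query strategy is: on $K[i..j]$, use the sparsified \citeauthor{Mut02} machinery of Section~\ref{sec:min-smallsigma} to identify at most $\lceil 1/\tau\rceil$ nonempty classes; then, mapping the range into each class subsequence $S_c[i_c..j_c]$ with the appropriate scaled threshold $\tau_c$, run a $\tau_c$-minority query on each $S_c$ in turn until one minority is found or the classes are exhausted.

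The space side is where Theorem~\ref{thm:epsilon-largesigma} does the work. First I would invoke the representation of \citet{BCGNN13} for the class subsequences that costs $(1+\epsilon/3)nH+o(n)$ bits while supporting constant-time access and select, exactly as in the $\epsilon$-variant for majorities. On top of this I would build, for each $S_c$, the same auxiliary bitvectors $I_b^t$ (the minority analogues of $G_b^t$) together with the sampling bitvectors $J_b^t$, arrays $C_b^t$, and succinct SB-trees described in Section~\ref{sec:largesigma}, but with the lower bound for $b$ set to a suitable constant so that $g(n,\sigma)$ is constant. By the same calculation as in the majority case — choosing $g(n,\sigma)=\tfrac{6\kappa}{\epsilon}\lg\tfrac{6\kappa}{\epsilon}$ for the constant $\kappa$ governing the bitvector space — these auxiliary structures add only $(\epsilon/3)nH$ bits, and the remaining $o(nH)+o(n)$ terms are absorbed into a further $(\epsilon/3)nH+o(n)$, giving a total of $(1+\epsilon)nH+o(n)$ bits.

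For the time bound I would reuse the argument that the $\tau_c$-minority queries over all considered classes sum to $\Oh{1/\tau}$, since $\sum_c 1/\tau_c = \Oh{1/\tau}$ exactly as in Section~\ref{sec:smallsigma-succinct}. The short ranges, of length $\Oh{(1/\tau_c)\,g(n,\sigma)}$, are handled by the sequential algorithm of Section~\ref{sec:largesigma-compressed}, now stopping as soon as a tested candidate fails the $\tau_c$-majority test and reporting it as the $\tau$-minority; since $g(n,\sigma)$ is constant this sequential phase runs in $\Oh{1/\tau_c}$ time. On the larger ranges, Lemma~\ref{lem:check} is applied at the leftmost occurrence of each candidate, found through the $J_b^t$-sampled \citeauthor{Mut02} procedure together with the succinct SB-trees, all in constant time per candidate.

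The main obstacle I anticipate is not conceptual novelty but the careful bookkeeping to verify that the minority query never needs a full $\tau$-majority structure inside each class — only the weaker task of enumerating $\lceil 1/\tau_c\rceil$ distinct candidates — and that this weaker task is still compatible with the constant-$g$ space accounting. In particular, one must confirm that the sparsified candidate-enumeration on $I_b^t$ (and on $C'_v$ at the $K$-level) produces enough distinct elements to guarantee a minority is among them, while the two space obstacles flagged for majorities — the $\Oh{n_c}$ bits of \citeauthor{Mut02}'s structure for $\tau_c<1/\sigma_c$ and the $\Oh{n_c\lg\lg n_c}$ bits of the succinct SB-trees — are resolved identically (by sparsification and by the piece-based universe reduction, respectively). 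Once these are checked, the theorem follows by selecting the $(1+\epsilon)$-space representation of $S$ with constant $g(n,\sigma)$, yielding $\Oh{1/\tau}$ query time.
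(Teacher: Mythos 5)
Your proposal matches the paper's own proof in all essentials: it reuses the $I_b^t$/$J_b^t$/succinct-SB-tree machinery with Lemma~\ref{lem:check} for candidate testing, applies the alphabet-partitioning scheme with the Section~\ref{sec:min-smallsigma} method on $K$, handles short ranges with the early-stopping sequential algorithm, and obtains the $(1+\epsilon)nH+o(n)$ bound by instantiating the constant-$g(n,\sigma)$ choice from Theorem~\ref{thm:epsilon-largesigma}. The paper's argument is exactly this, so the proposal is correct and takes essentially the same route.
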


In both cases, we can share the same structures to find majorities and 
minorities.

\section{Conclusions} \label{sec:conclusions}

We have given the first linear-space data structure for parameterized range majority with query time $\Oh{1 / \tau}$, even in the more difficult case of $\tau$ specified at query time. This is worst-case optimal in terms of $n$ and $\tau$, since the output size may be up to $1/\tau$.  Moreover, we have improved the space bounds for parameterized range majority and minority, reaching in many cases optimally compressed space with respect to the entropy $H$ of the distribution of the symbols in the sequence. While we have almost closed the problem in these terms, there are some loose ends that require further research:
\begin{itemize}
\item Our results for $\tau$-majorities are worst-case time optimal, but they 
take $\Oh{1/\tau}$ time even if the number of majorities is $o(1/\tau)$. Is it
possible to run in time $\Oh{\occ+1}$ when there are $\occ$ $\tau$-majorities?
Can we use $\Oh{\occ+1}$ instead of $\Oh{1/\tau}$ space?
\item Our structure and previous ones for $\tau$-minorities also take time
$\Oh{1/\tau}$, although we are required to output only one 
$\tau$-minority. Is it possible to run in $\Oh{1}$ time, or to prove a lower
bound? Can we use less than $\Oh{1/\tau}$ space?
\item On large alphabets, $\sigma=w^{\omega(1)}$, both for $\tau$-majorities 
and $\tau$-minorities we must use $(1+\epsilon)nH+o(n)$ bits, for any 
constant $\epsilon>0$, to reach time $\Oh{1/\tau}$. With $nH+o(n)(H+1)$ bits 
we only have a time of the form $(1/\tau)\cdot \omega(1)$. Is it possible to
close this gap?
\item Our results do not improve when $\tau$ is fixed at indexing time, which
is in principle an easier scenario. Is it possible to obtain better results 
for fixed $\tau$?
\end{itemize}

\begin{acks}
Many thanks to Patrick Nicholson for helpful comments.
\end{acks}

\bibliographystyle{authordate1}
\bibliography{paper}

\begin{thebibliography}{}

%

\bibitem[\protect\citename{Barbay {\em et~al.\ }\relax, }2014]{BCGNN13}
Barbay, J., Claude, F., Gagie, T., Navarro, G., \& Nekrich, Y. 2014.
\newblock Efficient fully-compressed sequence representations.
\newblock {\em Algorithmica}, {\bf 69}(1), 232-268.

\bibitem[\protect\citename{Belazzougui \& Navarro, }2014]{BN14}
Belazzougui, D., \& Navarro, G. 2014.
\newblock Alphabet-independent compressed text indexing.
\newblock {\em ACM Transactions on Algorithms}, {\bf 10}(4), article 23.

\bibitem[\protect\citename{Belazzougui \& Navarro, }2015]{BN15}
Belazzougui, D., \& Navarro, G. 2015.
\newblock Optimal Lower and Upper Bounds for Representing Sequences.
\newblock {\em ACM Transactions on Algorithms}, {\bf 11}(4), article 31.


\bibitem[\protect\citename{Belazzougui {\em et~al.\ }\relax, }2009]{BBPV09}
Belazzougui, D., Boldi, P., Pagh, R., \& Vigna, S.. 2009.
\newblock Monotone minimal perfect hashing: searching a sorted table with O (1)
  accesses.
\newblock {\em Proc. 20th Annual
  ACM-SIAM Symposium on Discrete Algorithms (SODA)}, pp. 785--794.


\bibitem[\protect\citename{Bose {\em et~al.\ }\relax, }2005]{BKMT05}
Bose, P., Kranakis, E., Morin, P., \& Tang, Y. 2005.
\newblock Approximate range mode and range median queries.
\newblock {\em Proc. 22nd Symposium on
  Theoretical Aspects of Computer Science (STACS)}, pp. 377-388.



\bibitem[\protect\citename{Chan {\em et~al.\ }\relax, }2015]{CDSW15}
Chan, T.~M., Durocher, S., Skala, M., \& Wilkinson, B.~T. 2015.
\newblock Linear-space data structures for range minority query in arrays.
\newblock {\em Algorithmica}, {\bf 72}(4), 901--913.

\bibitem[\protect\citename{Chan {\em et~al.\ }\relax, }2014]{CDLMW14}
Chan, T.~M., Durocher, S., Larsen, K.~G., Morrison, J., \& Wilkinson, B.~T.
  2014.
\newblock Linear-space data structures for range mode query in arrays.
\newblock {\em Theory of Computing Systems}, {\bf 55}(4), 719--741.


\bibitem[\protect\citename{Cormode \& Muthukrishnan, }2003]{CM03}
Cormode, G., \& Muthukrishnan, S. 2003.
\newblock {\em Data Stream Methods}.
\newblock {\tt
  http://www.cs.rutgers.edu/{\raise.17ex\hbox{$\scriptstyle\sim$}}muthu/198-3.pdf}.
\newblock Lecture 3 of Rutger's {\em 198:671 Seminar on Processing Massive Data
  Sets}.

\bibitem[\protect\citename{Demaine {\em et~al.\ }\relax, }2002]{DLM02}
Demaine, E.~D., L{\'o}pez-Ortiz, A., \& Munro, J.~I. 2002.
\newblock Frequency estimation of internet packet streams with limited space.
\newblock {\em Proc. 10th European
  Symposium on Algorithms (ESA)}, pp. 348--360.

\bibitem[\protect\citename{Durocher {\em et~al.\ }\relax, }2013]{DHMNS13}
Durocher, S., He, M., Munro, J.~I., Nicholson, P.~K., \& Skala, Matthew. 2013.
\newblock Range majority in constant time and linear space.
\newblock {\em Information and Computation}, {\bf 222}, 169--179.

\bibitem[\protect\citename{Durocher {\em et~al.\ }\relax, }2016]{DSST16}
Durocher, S., Shah, R., Skala, M., \& Thankachan, S.~V.
  2016.
\newblock Linear-Space Data Structures for Range Frequency Queries on Arrays
  and Trees.
\newblock {\em Algorithmica}, {\bf 74}(1), 344-366.

\bibitem[\protect\citename{Elmasry {\em et~al.\ }\relax, }2011]{EHMN11}
Elmasry, A., Munro, J.~I., \& Nicholson, P.~K. 2011.
\newblock Dynamic range majority data structures.
\newblock {\em Proc. 22nd International
  Symposium on Algorithms and Computation (ISAAC)}, pp. 150--159.

\bibitem[\protect\citename{Ferragina {\em et~al.\ }\relax, }2007]{FMMN07}
Ferragina, P., Manzini, G., M{\"a}kinen, V., \& Navarro, G. 2007.
\newblock Compressed representations of sequences and full-text indexes.
\newblock {\em ACM Transactions on Algorithms}, {\bf 3}(2), article 20.

\bibitem[\protect\citename{Fischer, }2010]{Fis10}
Fischer, J. 2010.
\newblock Optimal succinctness for range minimum queries.
\newblock {\em Proc. 9th Latin American
  Symposium on Theoretical Informatics (LATIN)}, pp. 158--169.


\bibitem[\protect\citename{Gagie {\em et~al.\ }\relax, }2011]{GHMN11}
Gagie, T., He, M., Munro, J.~I., \& Nicholson, P.~K. 2011.
\newblock Finding frequent elements in compressed {2D} arrays and strings.
\newblock {\em Proc. 18th Symposium on
  String Processing and Information Retrieval (SPIRE)}, pp. 295--300.

\bibitem[\protect\citename{Golynski {\em et~al.\ }\relax, }2006]{GMR06}
Golynski, A., Munro, I., \& Rao, S. 2006.
\newblock Rank/select operations on large alphabets: a tool for text indexing.
\newblock {\em Proc. 17th Annual ACM-SIAM Symposium
  on Discrete Algorithms (SODA)}, pp. 368--373.

\bibitem[\protect\citename{Greve {\em et~al.\ }\relax, }2010]{GJLT10}
Greve, M., J{\o}rgensen, A.~G., Larsen, K.~D., \& Truelsen, J. 2010.
\newblock Cell probe lower bounds and approximations for range mode.
\newblock {\em Proc. 37th International
  Colloquium on Automata, Languages and Programming (ICALP)}, pp. 605--616.

\bibitem[\protect\citename{Grossi {\em et~al.\ }\relax, }2009]{GRR09}
Grossi, R., Orlandi, A., Raman, R., \& Rao, S.~S. 2009.
\newblock More Haste, Less Waste: Lowering the Redundancy in Fully Indexable
  Dictionaries.
\newblock {\em Proc. 26th Symposium on
  Theoretical Aspects of Computer Science (STACS)}, pp. 517--528.



\bibitem[\protect\citename{Hon {\em et~al.\ }\relax, }2009]{HSV09}
Hon, W.-K., Shah, R., \& Vitter, J. 2009.
\newblock Space-Efficient Framework for Top-$k$ String Retrieval Problems.
\newblock {\em Proc. 50th IEEE Annual Symposium on
  Foundations of Computer Science (FOCS)}, pp. 713--722.


\bibitem[\protect\citename{Karp {\em et~al.\ }\relax, }2003]{KSP03}
Karp, R.~M., Shenker, S., \& Papadimitriou, C.~H. 2003.
\newblock A simple algorithm for finding frequent elements in streams and bags.
\newblock {\em ACM Transactions on Database Systems}, {\bf 28}(1), 51--55.

\bibitem[\protect\citename{Karpinski \& Nekrich, }2008]{KN08}
Karpinski, M., \& Nekrich, Y. 2008.
\newblock Searching for frequent colors in rectangles.
\newblock {\em Proc. 20th Canadian
  Conference on Computational Geometry (CCCG)}, pp. 11--14.

\bibitem[\protect\citename{Krizanc {\em et~al.\ }\relax, }2005]{KMS05}
Krizanc, D., Morin, P., \& Smid, M. H.~M. 2005.
\newblock Range mode and range median queries on lists and trees.
\newblock {\em Nordic Journal of Computing}, {\bf 12}(1), 1--17.

\bibitem[\protect\citename{Lai {\em et~al.\ }\relax, }2008]{LPS08}
Lai, Y.~K., Poon, C.~K., \& Shi, B. 2008.
\newblock Approximate colored range and point enclosure queries.
\newblock {\em Journal of Discrete Algorithms}, {\bf 6}(3), 420--432.

\bibitem[\protect\citename{Misra \& Gries, }1982]{MG82}
Misra, J., \& Gries, D. 1982.
\newblock Finding repeated elements.
\newblock {\em Science of Computer Programming}, {\bf 2}(2), 143--152.

\bibitem[\protect\citename{Muthukrishnan, }2002]{Mut02}
Muthukrishnan, S. 2002.
\newblock Efficient algorithms for document retrieval problems.
\newblock {\em Proc. 13th Symposium on
  Discrete Algorithms (SODA)}, pp. 657--666.

\bibitem[\protect\citename{Navarro \& Thankachan, }2016]{NT15}
Navarro, G., \& Thankachan, S.~V. 2016.
\newblock Optimal Encodings for Range Majority Queries.
\newblock {\em Algorithmica}, {\bf 74}(3), 1082--1098.

\bibitem[\protect\citename{Petersen, }2008]{Pet08}
Petersen, H. 2008.
\newblock Improved bounds for range mode and range median queries.
\newblock {\em Proc. 34th Conference on
  Current Trends in Theory and Practice of Computer Science (SOFSEM)}, pp. 418--423.

\bibitem[\protect\citename{Petersen \& Grabowski, }2009]{PG09}
Petersen, H., \& Grabowski, S. 2009.
\newblock Range mode and range median queries in constant time and
  sub-quadratic space.
\newblock {\em Information Processing Letters}, {\bf 109}(4), 225--228.

\bibitem[\protect\citename{P\v{a}tra\c{s}cu, }2008]{Pat08}
P\v{a}tra\c{s}cu, M. 2008.
\newblock Succincter.
\newblock {\em Proc. 49th Symposium on
  Foundations of Computer Science (FOCS)}, pp. 305--313.


\bibitem[\protect\citename{Sadakane, }2007]{Sad07}
Sadakane, K. 2007.
\newblock Succinct data structures for flexible text retrieval systems.
\newblock {\em Journal of Discrete Algorithms}, {\bf 5}(1), 12--22.

\bibitem[\protect\citename{Wei \& Yi, }2011]{WY11}
Wei, Z., \& Yi, K. 2011.
\newblock Beyond simple aggregates: indexing for summary queries.
\newblock {\em Proc. 30th Symposium on
  Principles of Database Systems (PODS)}, pp. 117--128.


\end{thebibliography}

\end{document}